\documentclass[english]{llncs}
\usepackage[T1]{fontenc}
\usepackage[utf8x]{inputenc}
\setcounter{secnumdepth}{4}
\usepackage{babel}
\usepackage{units}
\usepackage{amsmath}
\usepackage{amssymb}
\usepackage{graphicx}
\usepackage[numbers]{natbib}
\usepackage[unicode=true]
 {hyperref}
\usepackage{breakurl}

\makeatletter
\usepackage[vlined,plain,noresetcount]{algorithm2e} 
\usepackage{lineno}
\SetAlFnt{\small\sf}

\usepackage{graphicx}
\usepackage{xcolor}
\usepackage[adobe-utopia]{mathdesign}

\DeclareMathOperator{\sig}{\emph{sig}}
\DeclareMathOperator{\block}{\emph{block}}
\DeclareMathOperator{\oute}{out}
\DeclareMathOperator{\ine}{in}
\DeclareMathOperator{\id}{id}
\DeclareMathOperator{\smax}{smax}
\DeclareMathOperator{\ms}{ms}
\DeclareMathOperator{\mv}{mv}

\DeclareMathOperator{\sbmv}{\emph{subBlockMap.values}}
\DeclareMathOperator{\new}{\emph{new}}
\DeclareMathOperator{\lab}{\emph{lab}}


\usepackage[margin=1in]{geometry}





\makeatother

\begin{document}

\title{Concurrent bisimulation algorithm}

\author{Konrad Kułakowski }

\institute{Department of Applied Computer Science,\\
AGH University of Science and Technology\\
Al. Mickiewicza 30, \\
30-059 Cracow, Poland\\
\email{konrad.kulakowski@agh.edu.pl}}
\maketitle
\begin{abstract}
The coarsest bisimulation-finding problem plays an important role
in the formal analysis of concurrent systems. For example, solving
this problem allows the behavior of different processes to be compared
or specifications to be verified. Hence, in this paper an efficient
concurrent bisimulation algorithm is presented. It is based on the
sequential \emph{Paige} and \emph{Tarjan} algorithm and the concept
of the state signatures. The original solution follows \emph{Hopcroft's}
principle \emph{``process the smaller half''}. The presented algorithm
uses its generalized version \emph{``process all but the largest
one''} better suited for concurrent and parallel applications. The
running time achieved is comparable with the best known sequential
and concurrent solutions. At the end of the work, the results of tests
carried out are presented. The question of the lower bound for the
running time of the optimal algorithm is also discussed.
\end{abstract}

\section{Introduction}

Modeling of concurrent systems is an important but difficult task.
A variety of methods and formalisms, including \emph{Petri} nets~\citep{Diaz2013pnfm},
different process algebras~\citep{Roscoe2005ttap,Bergstra2006otdr,Kwiatkowska2009ppmc,Balicki2009fdox},
state machines, temporal logic and others, have been developed to
solve this problem. Constructed tools allow users to model and analyze
interprocess communication and interaction \citep{Calzolai2008tatf,Garavel2011cadp,Bezemer2011lhrt,Armstrong2012}.
During analysis, the question of whether two (or more) processes operate
identically very often comes up. One way to answer this question~\citep{vanGlabbeek20013}
leads through checking the relation of bisimulation equivalence between
initial states of the compared processes. The notion of action-based
bisimulation has been brought up independently by \emph{Milner}~\citep{Milner1980acoc}
and \emph{Park}~\citep{Park1981caao}. Over time, it has become the
basis for other equivalence relations, such as branching bisimulation~\citep{vanGlabbeek1996btab}
or \emph{Stutter} bisimulation~\citep{Baier2008pomc}. In practice,
the decision about bisimilarity between initial states needs to solve
the more general \emph{Relational Coarsest Partition Problem} \emph{(RCPP)}.
The first effective algorithm for \emph{RCPP} was given by \emph{Kanellakis}
and \emph{Smolka} \citep{Kanellakis1983cefs}. Their algorithm has
the time complexity $O(|\mathsf{T}|\cdot|\mathsf{S}|)$ where $|\mathsf{S}|$
denotes the number of states, and $|\mathsf{T}|$ is the number of
transitions among them. One year later, the simplified, single function
\emph{RCPP} problem was addressed by \emph{Paige} and \emph{Tarjan}~\citep{Paige1984alta}.
They provided the general solution three years later~\citep{PaigeTarjan1987tpra}.
Achieving an excellent running time $O(|\mathsf{T}|\log|\mathsf{S}|)$
was possible thanks to the adoption of the brilliant algorithmic strategy
\emph{``process the smaller half''} proposed by \emph{Hopcroft}~\citep{Hopcroft1971anln}.
The \emph{Page} and \emph{Tarjan} algorithm inspired other researchers
for further work on bisimulation algorithms~\citep{Fernandez1989,Baier1997wbff}.
An algorithm deciding on branching bisimulation has been proposed
by \emph{Groote} and \emph{Vaandrager}~\citep{Groote1990aeaf}. \emph{RCPP}
was also the subject of research in the field of concurrent and parallel
algorithms. \emph{Balc{\'a}zar} et~al.~\citep{Balcazar1992dbip},
reduced a bisimulation problem to the \emph{Monotone Alternating Circuit
Value Problem,} and thus, showed that deciding on bisimilarity is
\emph{P-complete}. A few years later, two parallel algorithms modeled
on \citep{Kanellakis1983cefs,PaigeTarjan1987tpra}~were shown by
\emph{Rajasekaran} and \emph{Lee}~\citep{Rajasekaran1998pafr}. Their
algorithms achieve $O(|\mathsf{S|^{1+\epsilon})}$ for $\epsilon>0$
and $O(|\mathsf{S|\log|\mathsf{S|)}}$ concurrent running times correspondingly.
\emph{Jeong} et al. \citep{Jeong98afpi} claimed that the \emph{Kanellakis}
and \emph{Smolka} algorithm has parallel implementation running in
$O(|\mathsf{S|})$ time. The parallel algorithm for a single function
\emph{RCPP} is proposed in~\citep{Ha1999aepa}. An efficient distributed
algorithm for strong bisimulation is proposed by \emph{Blom} and\emph{~Orzan}~\citep{Blom2002adaf}.
The algorithm uses the concept of state signatures, which are subsequently
determined during the following partition refinements. Studies on
the use of state signatures were continued in~\citep{Blom2009dbbm}.
An important contribution into the research on \emph{RCPP} is the
bisimulation algorithm given by \emph{Dovier}~et~al.~\citep{Dovier2004aeaf}.
The running time of this algorithm in the case of an acyclic \emph{LTS}
graph is $O(|\mathsf{S}|)$. The works~\citep{PaigeTarjan1987tpra,Dovier2004aeaf}
were the source of inspiration for the incremental bisimulation algorithm
provided by \emph{Saha}~\citep{Saha2007aiba}. 

The algorithm presented in this article uses the notion of a state
signature similar to that defined in \citep{Blom2009dbbm}. However,
in contrast to the solution presented there, it follows the \emph{Page}
and \emph{Tarjan} approach proposed in \citep{PaigeTarjan1987tpra}.
The result is a concurrent algorithm that combines the speed of \emph{Blom
}and\emph{ Orzan's} approach with the robustness of the\emph{ ``process
the smaller half''} strategy. Combining a multi-way splitting together
with \emph{Hopcroft's }strategy in the context of the concurrent processing
results in formulation of the principle - \emph{``process all but
the largest one''}. The presented algorithm intensively uses data
structures such as hash tables, queues and sets. Hence, its final
efficiency highly depends on the effectiveness of these structures.
One of the important performance indicators is the expected (average)
running time. It describes the speed of the algorithm as it usually
is. Therefore, it is very important in practice. The analysis carried
out in this work focuses on the expected running time of the presented
algorithm. On the sequential case there is $O(|\mathsf{T}|\log|\mathsf{S}|)$.
Hence, it is as good as the best known solution \citep{PaigeTarjan1987tpra}.
The worst case scenario running time of the algorithm's concurrent
implementation tends to $O(|\mathsf{S}|\log\beta)$, where $\beta$
is the maximal number of transitions outgoing from a single state.
Thus, assuming that between two different states there could be at
most one transition, the presented solution achieves a time complexity
known from \citep{Rajasekaran1998pafr}. The second estimation, however,
is computed on the assumption that all the hash maps are implemented
in the form of directly addressed arrays \citep{Cormen2009ita}. This
ensures full parallelism, although it often leads to a high demand
for resources. Therefore, in practice, the concurrent implementations
must use concurrent data structures \citep{Moir2005hods}, which provide
a reduced degree of parallelism. Hence, the estimation $O(|\mathsf{S}|\log\beta)$
needs to be treated as the lower bound for the expected running time
for the concurrent implementations of this algorithm. Nevertheless,
the fully parallel implementation is possible and, for some specific
\emph{LTS} graphs, it may be useful. A detailed analysis of the fully
parallel algorithm is presented in (Sec. \ref{sub:Concurrent-complexity}).

The article is composed of eight sections, where the first two provide
a brief literature review, and introduce indispensable theoretical
definitions. Section \ref{sec:Bisimulation-Algorithm} describes the
general, sequential form of the algorithm. All the procedures are
defined and described there. The principle \emph{``process all but
the largest one'' }is separately explained in (Sec. \ref{sub:Process-all-but-the-largest-one}).
The next section (Sec. \ref{sec:Sequential-complexity}) analyses
the sequential running time of the algorithm. The concurrent solution
is discussed in Section \ref{sub:Concurrent-algorithm}. The matter
of its optimality is addressed in Section \ref{sec:Notes-on-the-solution-optimality}.
The penultimate section (Sec. \ref{sub:Concurrent-implementation})
discusses the issues related to concurrent implementation and provides
preliminary test results. The work ends with Section \ref{sec:Summary}
- \emph{Summary}.

\section{Preliminary information\label{sec:Preliminaries}}

In this section, the notion of bisimulation equivalence (referred
further as bisimulation) and the necessary definitions are introduced.
The most fundamental concept used in the context of bisimulation is
a labelled transition system \emph{(LTS)} representing all the possible
states and transitions of a model. 
\begin{definition}
Let LTS $ $be a triple $(\mathsf{S},\mathsf{T},L)$ where $\mathsf{S}$
is a finite set of states, $\mathsf{T}\subset S\times S$ is the finite
set of transitions, and $L:S\times S\rightarrow A$ is the labeling
function over the finite alphabet $A$. The situation when $L(u,v)=a$
will also be denoted as $u\overset{a}{\rightarrow}v$.
\end{definition}
From the user's perspective, it is convenient to consider \emph{LTS}
as a directed labelled graph where vertices represent states and transitions
represent edges. Bearing in mind this analogy, the terms such as states,
vertices, transition and edges are used interchangeably. 

A signature of $u\in S$ is formed by the set of labels of all the
edges starting from $u$, i.e. $\sig(u)\overset{df}{=}\{L(u,v)|(u,v)\in\mathsf{T}\}$.
Similarly, the signature of a group of states is formed by the sum
of state signatures i.e. $\sig(P)\overset{df}{=}\bigcup_{u\in P}\sig(u)$
where $P\subset S$. The set of directly reachable neighbors of $u$
in \emph{LTS} is called output states of $u$ and denoted as $\oute(u)\overset{df}{=}\{v|(u,v)\in\mathsf{T}\}$.
The state $u$ is directly reachable from elements of $\ine(u)\overset{df}{=}\{v|(v,u)\in\mathsf{T}\}$
called input states of $u$. Similarly, the input and output states
of $Q\subset S$ are defined as $\ine(Q)\overset{df}{=}\bigcup_{q\in Q}\ine(q)$
and $\oute(Q)\overset{df}{=}\bigcup_{q\in Q}\oute(q)$.  
\begin{definition}
\label{bisim-def}Let $(\mathsf{S},\mathsf{T},L)$ be a labelled transition
system. A bisimulation between elements of \textup{$\mathsf{S}$}
is the relation $\sim\subseteq\mathsf{S}\times\mathsf{S}$ so that
whenever $(s_{1},s_{2})\in\sim$ then $\forall a\in A$ holds
\begin{itemize}
\item if $L(s_{1},s_{1}^{'})=a$ then there is an $s_{2}^{'}$ such that
$L(s_{2},s_{2}^{'})=a$ and $(s_{1}^{'},s_{2}^{'})\in\sim$
\item if $L(s_{2},s_{2}^{'})=a$ then there is an $s_{1}^{'}$ such that
$L(s_{1},s_{1}^{'})=a$ and $(s_{1}^{'},s_{2}^{'})\in\sim$
\end{itemize}
\end{definition}
The two states $s_{1},s_{2}\in\mathsf{S}$ are bisimulation equivalent
(or bisimilar), written $s_{1}\sim s_{2}$, if there exists a bisimulation
$\sim$ so that $(s_{1},s_{2})\in\sim$. Every equivalence relation
in the set of states $\mathsf{S}$ defines a partition $\mathcal{P}\subset2^{\mathsf{S}}$
into non-empty disjoint subsets (equivalence classes). Conversely,
every partition $\mathcal{P}\subset2^{\mathsf{S}}$ defines an equivalence
relation in $\mathsf{S}$ (see \emph{Decomposition Theorem} \citep[p. 297]{Bronstein2005tdm}).
Since a bisimulation is an equivalence relation, thus their equivalence
classes in a natural way determine the division of $S$ into subsets.
In particular, two elements $s_{1},s_{2}\in\mathsf{S}$ are bisimilar
if they belong to the same equivalence class of some bisimulation.
According to the \emph{Decomposition Theorem}, a bisimulation equivalence
is uniquely represented by some partition $\mathcal{P}\subset2^{\mathsf{S}}$.
Therefore, to decide whether $s_{1}\sim s_{2}$, first, $\mathcal{P}$
needs to be computed, then checked if $s_{1},s_{2}\in P$ for some
$P\in\mathcal{P}$. The presented algorithm focuses on the concurrent
calculation of $\mathcal{P}$, and assumes that the partition membership
problem for $s_{1}$ and $s_{2}$ is simple and can be efficiently
verified.

\section{Sequential algorithm\label{sec:Bisimulation-Algorithm}}

During its sequential execution, the algorithm uses a few global data
structures. These are: $\mathcal{P}$ - set of partitions (as explained
later, it is used mainly for demonstration purposes) and \emph{initPartition}
- set of partitions after the initialization phase. Every partition
(referred also as a block) has its own identifier. The mapping between
identifiers and blocks is stored in the \emph{blockById} linked map.
Similarly, each state belongs to the block with the specified identifier.
The mapping between states and blocks, applicable in the current step
of the algorithm, is stored in the linked map \emph{stateToBlockId}.
The new mapping, for the next step of the algorithm, is stored in
the \emph{nextStateToBlockId} linked map. The queues $\mathcal{M}$
and $\mathcal{S}$ hold blocks to be marked and to be used as splitters
respectively (Listing \ref{listing:bisimalg:structures}). 

\IncMargin{1em}  
\LinesNumbered  
\begin{algorithm}[h]

\texttt{set $\mathcal{P}$}

\texttt{linked map $ $}\emph{initPartition}

\texttt{linked map }\emph{blockById}

\texttt{linked map }\emph{stateToBlockId}

\texttt{linked map }\emph{nextStateToBlockId}

\texttt{queue $\mathcal{M}$}

\texttt{queue $\mathcal{S}$}

\NoCaptionOfAlgo
\medskip \caption{\textbf{Listing \thealgocf: } Bisimulation algorithm - the global data structures}

\label{listing:bisimalg:structures}

\end{algorithm}
\DecMargin{1em} 

The main routine of the algorithm consists of two parts. The first,
initialization, is responsible for preparing an initial version of
the partition $\mathcal{P}$ and filling the auxiliary data structures
(Listing: \ref{listing:bisimalg:main}, line: \ref{code:bisimalg:main:initialization}).
The second one consists of three cyclic steps: mark - determines blocks
that need to be refined, split - performs block splitting, and copy
- updates auxiliary data structures according to the performed refinement
(Listing: \ref{listing:bisimalg:main}, line: \ref{code:bisimalg:main:mark-split-copy}).
All the three mark, split and copy steps of the main part of the algorithm
are repeated as long as further refinements are required (i.e. queues
$\mathcal{M}$ is not empty). 

\IncMargin{1em}  
\LinesNumbered  
\begin{algorithm}[h]

\texttt{BisimulationAlgorithm()}

\texttt{~~InitializationPhase()\label{code:bisimalg:main:initialization}}

\texttt{~~if $|$$\mathcal{P}$$|\neq1$ \label{code:bisimalg:main:processing_cond}}

\texttt{~~~~MarkSplitCopyPhase()\label{code:bisimalg:main:mark-split-copy}}

\texttt{~~return $\mathcal{P}$}

\NoCaptionOfAlgo
\medskip \caption{\textbf{Listing \thealgocf: } Bisimulation algorithm - main routine}

\label{listing:bisimalg:main}

\end{algorithm}
\DecMargin{1em}

\subsection{Initialization Phase }

The purpose of the initialization phase is to prepare the first, initial
version of the partition $\mathcal{P}$ (and \emph{stateToBlockId}
map). This phase is composed of the three subroutines that are responsible
for grouping states in $\mathsf{S}$ into blocks so that every two
states from the same block have an identical set of labels of the
outgoing edges. Thus, after the initialization phase for every $P\in\mathcal{P}$,
and for all $u,v\in P$, it holds that $sig(u)=sig(v)$. Of course,
if all the states have the same signatures, the initial version of
$\mathcal{P}$ contains only one block. In such a case it is easy
to prove that every two states satisfy the bisimulation relation,
thus, no further calculations are needed. Thus, after confirming that
the cardinality of $\mathcal{P}$ is one (the number of different
keys in \emph{blockById} is one) the algorithm ends up returning $\mathcal{P}$
on the output (Listing: \ref{listing:bisimalg:main}, Line: \ref{code:bisimalg:main:processing_cond}).
Otherwise the initialization phase must be followed by the mark-split-copy
phase as discussed in (Sec. \ref{sub:Mark-Split-Copy-Phase}). 

\IncMargin{1em}  
\LinesNumbered  
\begin{algorithm}[h]

\texttt{InitializationPhase()}

\texttt{~~parallel for $s\in\mathsf{S}$ \label{code:initializationPhase:StateSignatureInit-start}}

\texttt{~~~~}\emph{StateSignatureInit($s$)}\texttt{\label{code:initializationPhase:StateSignatureInit-end}}

\texttt{~~parallel for $s\in\mathsf{S}$ \label{code:initializationPhase:PartInitStart}}

\texttt{~~~~}\emph{PartitionInit($s$)}\texttt{\label{code:initializationPhase:PartInitEnd}}

\texttt{~~}\emph{let us denote}\texttt{ $\{(\sig_{1},\block_{1}),\ldots,$ }

\texttt{~~~~$(\sig_{r},\block_{r})\}=$}\emph{ initPartition}

\texttt{~~parallel for $(\sig_{i},\block_{i})\in$ }\emph{initPartition}\texttt{\label{code:initializationPhase:AuxStructInit-iter-begin}}

\texttt{~~~~}\emph{AuxStructInit(}\texttt{$\block_{i}$}\emph{)}\texttt{\label{code:initializationPhase:AuxStructInit-iter-end}}

\NoCaptionOfAlgo
\medskip \caption{\textbf{Listing \thealgocf: } Bisimulation algorithm - Initialization phase}

\label{listing:bisimalg:initphase}

\end{algorithm}
\DecMargin{1em} 

The main initialization procedure is \emph{InitializationPhase} (Listing:
\ref{listing:bisimalg:initphase}). It splits the set $\mathsf{S}$
of states into $k$ subsets ($k$ - is determined by the number of
available processors), then processes them successively using \emph{StateSignatureInit}(),
\emph{PartitionInit()} and \emph{AuxStructInit}(). Splitting $\mathsf{S}$
into possibly equal subsets is desirable for the parallel processing
performance. It contributes to the even distribution of computing,
however, as will be discussed later on, it does not guarantee running
time reduction. To achieve the desired level of parallelism, further
code parallelization is needed. For the purpose of understanding the
idea of the sequential algorithm, it is enough to treat each \emph{parallel
for} instruction as a simple sequential iteration. 

The first sub-procedure \emph{StateSignatureInit} (Listing: \ref{listing:bisimalg:initialization-threephases},
Lines: \ref{code:initialization:VertexSigInit-begin} - \ref{code:initialization:VertexSigInit-end})
is responsible for creating state signatures (Listing: \ref{listing:bisimalg:initialization-threephases},
line: \ref{code:initialization:VertexSigInit-sig-creation}). The
signatures are used to index new blocks in the \emph{initPartition()}
map (Listing: \ref{listing:bisimalg:initialization-threephases},
Lines: \ref{code:initialization:VertexSigInit-contains-sig} - \ref{code:initialization:VertexSigInit-end}).
Thus, the \emph{StateSignatureInit()} creates mapping between state
signatures and newly (empty) created blocks. The actual states are
assigned to these blocks in the next sub-procedure. Placing the block
creation and state assignment into two different sub-procedures simplifies
synchronization of the \emph{initPartition} map. In particular, there
is no need to block the whole \emph{initPartition} map, which would
be indispensable if both operations (the new block creation and adding
new states to them) had been implemented within the same loop. 

The second sub-procedure \emph{PartitionInit()} scans all states in
$\mathsf{S}$ and assigns them to the previously created blocks (Listing:
\ref{listing:bisimalg:initialization-threephases}, Lines: \ref{code:initialization:PartitionInit-get-sig}
- \ref{code:initialization:PartitionInit-block-update}). Thus, after
the execution of the \emph{PartitionInit()} procedure, all the states
with the same signature $s$ are kept in the block \emph{initPartition.get(s)}.
At the end of the sub-procedure the \emph{stateToBlockId} auxiliary
structure is updated (Listing: \ref{listing:bisimalg:initialization-threephases},
line: \ref{code:initialization:PartitionInit-end}). Next, \emph{stateToBlockId
}is used to easily calculate $\sig(v)$. 

\IncMargin{1em}  
\LinesNumbered  
\begin{algorithm}[h]

\texttt{StateSignatureInit($v$)\label{code:initialization:VertexSigInit-begin}}

\texttt{~~$\sig(v)\leftarrow\{L(v,u):(v,u)\in\mathsf{T}\}$\label{code:initialization:VertexSigInit-sig-creation}}

\texttt{~~if $\neg$}\emph{initPartition.contains}\texttt{$(\sig(v))$\label{code:initialization:VertexSigInit-contains-sig}}

\texttt{~~~~}\emph{initPartition.put($\sig(v)$,newBlock())}\texttt{
\label{code:initialization:VertexSigInit-end}}

\texttt{PartitionInit($v$)\label{code:initialization:PartitionInit-begin}}

\texttt{~~$\block$ $\leftarrow$ }\emph{initPartition.get($\sig(v)$)}\texttt{\label{code:initialization:PartitionInit-get-sig}}

\texttt{~~$\block\leftarrow\block\cup\{v\}$\label{code:initialization:PartitionInit-block-update}}

\texttt{~~}\emph{stateToBlockId.put(v.id,block.id)}\texttt{\label{code:initialization:PartitionInit-end}}

\texttt{AuxStructInit($\block$)\label{code:initialization:AuxStructInit-begin}}

\texttt{~~~~}\emph{blockById.put(block.id,block)}\texttt{\label{code:initialization:BlockById}}

\texttt{~~~~$\mathcal{P}\leftarrow\mathcal{P}\cup\{\block\}$\label{code:initialization:PartitionInitSet}}

\texttt{~~~~$\mathcal{S}\leftarrow\mathcal{S}\cup\{\block\}$\label{code:initialization:AuxStructInit-end}\label{code:initialization:SplitterInitSet}}

\NoCaptionOfAlgo
\medskip \caption{\textbf{Listing \thealgocf: } Bisimulation algorithm - Initialization phase subroutines}

\label{listing:bisimalg:initialization-threephases}

\end{algorithm}
\DecMargin{1em} 

The last sub-procedure \emph{AuxStructInit} creates mapping between
block ids and block references. The mapping is stored in \emph{blockById}
map. It also initially populates the set of splitters $\mathcal{S}$,
and the partition set $\mathcal{P}$.

\subsection{Mark-Split-Copy Phase\label{sub:Mark-Split-Copy-Phase}}

The starting point for the \emph{Mark-Split-Copy} phase is the procedure
\emph{MarkSplitCopyPhase()} (Listing: \ref{listing:bisimalg:marksplitcopyphase}).
It repeats in the sequential loop \emph{while} (Listing: \ref{listing:bisimalg:marksplitcopyphase},
Lines: \ref{code:MarkSplitCopyPhase:mainLoop:Begin} - \ref{code:MarkSplitCopyPhase:mainLoop:End})
three consecutive subroutines: Marking\emph{()}, \emph{Splitting()},
and \emph{Copying()}. Each subroutine is executed concurrently. The
input to the \emph{Marking()} instances are splitters - the blocks,
which induce further partition refinement. 

\IncMargin{1em}  
\LinesNumbered  
\begin{algorithm}[h]

\texttt{MarkSplitCopyPhase()}

\texttt{~~while(true)\label{code:MarkSplitCopyPhase:mainLoop:Begin}}

\texttt{~~~~parallel for $S\in\mathcal{S}$ \label{code:MarkSplitCopyPhase:mainLoop:confor-call-1}}

\texttt{~~~~~~}\emph{Marking($S$)}\texttt{\label{code:MarkSplitCopyPhase:mainLoop:MarkingCall}}

\texttt{~~~~if $\mathcal{M}=\varnothing$ \label{code:MarkSplitCopyPhase:mainLoop:NoMarkedBlocks_cond}}

\texttt{~~~~~~break\label{code:MarkSplitCopyPhase:mainLoop:NoMarkedBlocks_break}}

\texttt{~~~~parallel for $M\in\mathcal{M}$ \label{code:MarkSplitCopyPhase:mainLoop:confor-call-2}}

\texttt{~~~~~~}\emph{Splitting($M$)}\texttt{\label{code:MarkSplitCopyPhase:mainLoop:confor-splittingCall}}

\texttt{~~~~parallel for $(\mbox{s},\id)\in$ }\emph{nextStateToBlockId}\texttt{\label{code:MarkSplitCopyPhase:mainLoop:confor-call-3}}

\texttt{~~~~~~}\emph{Copying(}\texttt{$\mbox{s},\id$}\emph{)}\texttt{\label{code:MarkSplitCopyPhase:mainLoop:End}}

\NoCaptionOfAlgo
\medskip \caption{\textbf{Listing \thealgocf: } Bisimulation algorithm - MarkSplitCopy phase}

\label{listing:bisimalg:marksplitcopyphase}

\end{algorithm}
\DecMargin{1em} 

The initial set of splitters $\mathcal{S}$ is formed by the first
refinement of $\mathcal{P}$ (Listing: \ref{listing:bisimalg:initialization-threephases},
Lines: \ref{code:initialization:PartitionInitSet},\ref{code:initialization:SplitterInitSet}).
Each call \emph{Marking()} takes one splitter from $\mathcal{S}$,
and processes it (Listing: \ref{listing:bisimalg:marksplitcopyphase:marking}).
The result of the \emph{Marking()} loop is $\mathcal{M}$ - the set
of blocks marked to be split. If there is no block in $\mathcal{M}$
the algorithm ends (Listing: \ref{listing:bisimalg:marksplitcopyphase},
Lines: \ref{code:MarkSplitCopyPhase:mainLoop:NoMarkedBlocks_cond}
- \ref{code:MarkSplitCopyPhase:mainLoop:NoMarkedBlocks_break}). Blocks
in $\mathcal{M}$ are also the input to the \emph{Splitting()} subroutine.
Every instance of \emph{Splitting()} takes one block from $\mathcal{M}$.
As a result of \emph{Splitting() }the set $\mathcal{P}$ is refined
(blocks previously indicated in $\mathcal{M}$ are split) and some
of them go to $\mathcal{S}$ as splitters. The modified structure
of $\mathcal{P}$ is also reflected in \emph{nextStateToBlockId}.
Thus,\emph{ }after the \emph{Splitting()} loop execution, the \emph{Copying()}
replicates the newly created mapping \emph{nextStateToBlockId} to
the \emph{stateToBlockId }(Listing: \emph{\ref{listing:bisimalg:marksplitcopyphase:copyWorker}). }

\IncMargin{1em}  
\LinesNumbered  
\begin{algorithm}[h]

\texttt{Copying($\mbox{s},\id$)}

\texttt{~~~~}\emph{stateToBlockId.put($\mbox{s},\id$)}

\NoCaptionOfAlgo
\medskip \caption{\textbf{Listing \thealgocf: } Bisimulation algorithm - copying routine}

\label{listing:bisimalg:marksplitcopyphase:copyWorker}

\end{algorithm}
\DecMargin{1em}

\subsubsection{Marking}

After the initialization phase, the partition $\mathcal{P}$ contains
blocks consisting of states with unique signatures. Thus, for all
$P\in\mathcal{P}$ and for every two $u,v\in P$ it holds that $\sig(u)=\{L(u,t)|(u,t)\in\mathsf{T}\}=\{L(v,r)|(v,r)\in\mathsf{T}\}=\sig(v)$.
The fact that all the states within the same block have identical
signatures allows us to reformulate the condition of bisimulation. 
\begin{theorem}
\label{prop-bisimulation-cond}For every two blocks $P,S\in\mbox{\ensuremath{\mathcal{P}} }$
the states $u,v\in P$ are bisimilar if and only if it holds that
\begin{equation}
\left((u,t),(v,r)\in\mathsf{T}\wedge L(u,t)=L(v,r)=a\wedge t\in S\right)\Rightarrow r\in S\label{eq:bcond}
\end{equation}
\end{theorem}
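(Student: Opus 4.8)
The plan is to exploit the block invariant established just above the statement: for any $u,v$ in the same block $P$ one has $\sig(u)=\sig(v)$. This is what makes the reformulation possible, because it lets us discharge the ``matching label exists'' half of Definition~\ref{bisim-def} for free. Indeed, if $u\overset{a}{\rightarrow}t$ is any outgoing edge of $u$, then $a\in\sig(u)=\sig(v)$, so $v$ necessarily has at least one $a$-labelled outgoing edge $v\overset{a}{\rightarrow}r$; the only thing left to control is whether such an $r$ can be chosen inside the same block as $t$. Thus, under the block invariant, the whole content of bisimilarity collapses to the block-matching requirement expressed by~\eqref{eq:bcond}. I would state this observation first, since it fixes the meaning of the quantifiers: \eqref{eq:bcond} is to be read as ranging over all labels $a$, all targets $t,r$, and all blocks $S\in\mathcal{P}$ with $u,v\in P$ fixed, and the equivalence is understood with the blocks of $\mathcal{P}$ being exactly the bisimulation classes.

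For the forward implication I would assume $u\sim v$ and verify~\eqref{eq:bcond} directly from Definition~\ref{bisim-def}. Given $u\overset{a}{\rightarrow}t$ with $t\in S$, the first clause of the definition supplies an edge $v\overset{a}{\rightarrow}r'$ with $t\sim r'$; since $\mathcal{P}$ partitions $\mathsf{S}$ into bisimulation classes, $t\sim r'$ forces $r'$ into the same block $S$ that contains $t$. Hence $v$ reaches $S$ via an $a$-edge, which is precisely the conclusion of~\eqref{eq:bcond}. The symmetric second clause handles the roles of $u$ and $v$ interchanged, so the required implication holds in this direction.

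For the backward implication I would argue coinductively. Assume~\eqref{eq:bcond} holds for every block $S$, and consider the relation $R$ consisting of all pairs $(u,v)$ that lie in a common block $P\in\mathcal{P}$ and satisfy~\eqref{eq:bcond} for every $S$. I would show that $R$ is a bisimulation in the sense of Definition~\ref{bisim-def}. Take $(u,v)\in R$ and an edge $u\overset{a}{\rightarrow}t$, and let $S$ be the block with $t\in S$. By $\sig(u)=\sig(v)$ there is some $v\overset{a}{\rightarrow}r$, and by~\eqref{eq:bcond} every such $r$ satisfies $r\in S$, so $t$ and $r$ lie in a common block and therefore $(t,r)\in R$. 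The symmetric clause is identical with $u$ and $v$ swapped. Thus $R$ is a bisimulation, and since $\sim$ is the largest bisimulation we conclude $R\subseteq\,\sim$, i.e.\ $u\sim v$.

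The step I expect to be the main obstacle is the interface between the combinatorial condition~\eqref{eq:bcond} and the coinductive nature of $\sim$: the condition is phrased purely in terms of the current blocks of $\mathcal{P}$, whereas bisimilarity is a greatest fixed point. The backward direction is sound only once we know that the blocks being tested against are genuinely the bisimulation classes (equivalently, that $\mathcal{P}$ is stable), so the crux is the fixed-point argument ``$R$ is a bisimulation $\Rightarrow R\subseteq\,\sim$'' together with the identification of ``lies in the same block'' with ``is bisimilar''. A secondary subtlety is the asymmetry between the one-directional implication written in~\eqref{eq:bcond} and the two symmetric clauses of Definition~\ref{bisim-def}; I would resolve it by applying~\eqref{eq:bcond} with the roles of $u$ and $v$ exchanged, which recovers the reverse inclusion of successor blocks and shows that $u$ and $v$ reach exactly the same set of blocks under each label $a$.
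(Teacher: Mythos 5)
Your forward direction rests on an interpretation the paper neither makes nor can afford: you stipulate that ``the blocks of $\mathcal{P}$ are exactly the bisimulation classes'' and then use this to place the matching successor $r'$ in $S$. In the paper, $\mathcal{P}$ is the partition the algorithm is currently maintaining --- at this point in the text, the one produced by the initialization phase, whose defining property is that two states share a block iff they have \emph{equal signatures}, not iff they are bisimilar. The theorem exists precisely so that bisimilarity can be tested against this computable partition; if its blocks were already known to be bisimulation classes, the backward direction would be a tautology (same block means bisimilar by definition) and the forward direction would presuppose the output of the entire algorithm. The paper closes the step you outsourced to this assumption differently: from $t\sim r'$ it deduces $\sig(t)=\sig(r')$, and then invokes the construction of $\mathcal{P}$ (equal signature $\Rightarrow$ same block) to conclude $r'\in S$. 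You actually had this tool in hand --- you open with the invariant $\sig(u)=\sig(v)$ for same-block states --- but at the decisive moment you replaced the signature argument by the circular assumption. (One defect you share with the paper: in (Eq.~\ref{eq:bcond}) the state $r$ is universally quantified, so the forward direction ought to show that \emph{every} $a$-successor of $v$ lands in $S$, whereas both your argument and the paper's produce only \emph{one} matching successor $r'\in S$; neither proof covers an arbitrary $r$.)

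Your backward direction, by contrast, is a genuinely different route from the paper's, and a cleaner one. The paper assumes (Eq.~\ref{eq:bcond}) for \emph{all} $P,S\in\mathcal{P}$ and \emph{all} $u,v\in P$, supposes $u\not\sim v$, extracts a pair of equally-labelled paths $u\rightarrow t_{1}\rightarrow\cdots\rightarrow t_{k}$, $v\rightarrow r_{1}\rightarrow\cdots\rightarrow r_{k}$ ending in states with $\sig(t_{k})\neq\sig(r_{k})$, and pushes (Eq.~\ref{eq:bcond}) inductively along the paths to keep each $t_{i},r_{i}$ in a common block, hence equal signatures at the end --- contradiction. You instead exhibit a relation $R$ and show it is a bisimulation, concluding by maximality of $\sim$; this avoids the paper's unproven (though true for finite systems) claim that non-bisimilarity is always witnessed by such finite distinguishing paths. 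Two adjustments make it watertight. First, the global quantification must be explicit: your $R$ requires its pairs to satisfy (Eq.~\ref{eq:bcond}), yet your coinductive step only establishes that the successors $t$ and $r$ share a block, so ``therefore $(t,r)\in R$'' is justified only if (Eq.~\ref{eq:bcond}) is assumed for all pairs in all blocks --- exactly the reading the paper uses; under it you may simply take $R$ to be the same-block relation. Second, the step ``by $\sig(u)=\sig(v)$ there is some $v\overset{a}{\rightarrow}r$'' should be grounded in the initialization invariant (every block of $\mathcal{P}$ is a signature class), which also supplies $\sig(t)=\sig(r)$ when the coinduction later unfolds the pair $(t,r)$ --- again with no appeal to blocks being bisimulation classes.
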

\begin{proof}
``$\Rightarrow$'' since $u\sim v$ and $(u,t),(v,r)\in\mathsf{T}\wedge L(u,t)=a$
then due to (Def. \ref{bisim-def}) there must be such $r\in\mathsf{S}$
that $L(v,r)=a$ and $t\sim r$. Bisimilarity of $t$ and $r$ implies
that they have the same signatures i.e. $\sig(t)=\sig(r)$. Hence,
due to the construction of $\mathcal{P}$ they belong to the same
block, i.e. $t\in S\Rightarrow r\in S$. 

``$\Leftarrow$'' in contradiction, let us assume that for every
$P,S\in\mbox{\ensuremath{\mathcal{P}}}$ and $u,v\in P$ the right
side of the proposition \ref{prop-bisimulation-cond} is true but
$u$ and $v$ are not bisimilar. It is easy to see that the lack of
bisimilarity between $u$ and $v$ implies that there is a sequence
of symbols $a_{1},a_{2},\ldots,a_{k}$ from $A$ leading to the states
$t_{k},r_{k}$ (i.e. $u\overset{a_{1}}{\rightarrow}t_{1}\overset{a_{2}}{\rightarrow}t_{2}\overset{a_{3}}{\rightarrow}\ldots\overset{a_{k}}{\rightarrow}t_{k}$
and $u\overset{a_{1}}{\rightarrow}r_{1}\overset{a_{2}}{\rightarrow}r_{2}\overset{a_{3}}{\rightarrow}\ldots\overset{a_{k}}{\rightarrow}r_{k}$)
such that $b\in A$ and $t_{k+1}\in\mathsf{S}$ that $t_{k}\overset{b}{\rightarrow}t_{k+1}\in\mathsf{T}$
but there is no $r_{k+1}\in\mathsf{S}$ satisfying $r_{k}\overset{b}{\rightarrow}r_{k+1}\in\mathsf{T}$
(or reversely there is $c\in A$ and $r_{k+1}\in\mathsf{S}$ such
that $r_{k}\overset{c}{\rightarrow}r_{k+1}\in\mathsf{T}$ but there
is no $t_{k+1}\in\mathsf{S}$ so that $t_{k}\overset{c}{\rightarrow}t_{k+1}\in\mathsf{T}$).
In other words, $\sig(t_{k})\neq\sig(r_{k})$. 

However, according to the right side of (Th. \ref{prop-bisimulation-cond})
the assertion $u,v\in P$ implies that there is such $S_{1}\in\mathcal{P}$
that $t_{1},r_{1}\in S_{1}$. Similarly there is $S_{2},\ldots,S_{k}$
such that $t_{2},r_{2}\in S_{2},\ldots,t_{k},r_{k}\in S_{k}$. Hence,
due to the construction of $\mathcal{P}$ it holds that $\sig(t_{k})=\sig(r_{k})$.
Contradiction. 
\end{proof}
The aim of the presented algorithm is to prepare such $\mathcal{P}$
that satisfy the condition (Eq. \ref{eq:bcond}). Of course, it is
possible that for some intermediate partition refinement (Eq. \ref{eq:bcond})
is not true. In such a case, problematic blocks need to be split into
two or more smaller blocks. The search for the candidates to be split,
similarly as in \citep{PaigeTarjan1987tpra}, is slightly contrary
to the natural direction of the bisimulation definition. Thus, first
the block $S$ called splitter is taken from $\mathcal{S}$, then
all its predecessors (blocks $P$ such that $\oute(P)\cap S\neq\varnothing$)
are examined (Listing: \ref{listing:bisimalg:marksplitcopyphase:marking},
line: \ref{bisimalg:marksplitcopyphase:markerWorker:fetching_P}).
When $P\in\mathcal{P}$, for which (Eq. \ref{eq:bcond}) does not
hold, is identified, all the states which may form a new block are
marked. 

Let us assume that $S$ (splitter) is the subject of processing within
the subroutine \emph{Marking()} (Listing: \ref{listing:bisimalg:marksplitcopyphase:marking}).
Let $P$ be the currently examined block, and $a\in\sig(P)$ be the
label so that there exists $(u,t)\in\mathsf{T}\wedge u\in P\wedge t\in S$
and $L(u,t)=a$. Let us define a set of $a$-predecessors of $S$
with respect to $P$ as 
\begin{equation}
\overline{P}(a,S)\overset{df}{=}\{u\in P|\exists(u,t)\in\mathsf{T}\wedge t\in S\wedge a=L(u,t)\}\label{eq:a-pred-of-S-with-respect-to-P-def}
\end{equation}

Note that, if for every $S\in\mathcal{P}$ it holds that $\overline{P}(a,S)=P$
for all $a\in\sig(P)$ then the condition (Equation: \ref{eq:bcond})
holds. If for some intermediate partition refinement $\mathcal{P}$
there is $S,P\in\mathcal{P}$ such that $\overline{P}(a,S)\subsetneq P$
and $a\in\sig(P)$, then the identical signatures of vertices in $P$
imply that there is the vertex $v\in P\backslash\overline{P}(a,S)$
and transition $(v,r)\in\mathsf{T}$ labelled by $a$ such that $r\notin S$
(Fig. \ref{fig:markingPhase}). Then the condition (Equation: \ref{eq:bcond})
is violated. In order to restore the condition (Equation: \ref{eq:bcond})
$P$ needs to be split into $P\backslash\overline{P}(a,S)$ and $\overline{P}(a,S)$.
Therefore, the subroutine \emph{Marking()} first identifies such $P$
block as requiring division (Listing: \ref{listing:bisimalg:marksplitcopyphase:marking},
line: \ref{bisimalg:marksplitcopyphase:markerWorker:split_need_ident}
and \ref{bisimalg:marksplitcopyphase:markerWorker:foreach-splits-keys-end}),
then marks all the states from $\overline{P}(a,R)$ (Listing: \ref{listing:bisimalg:marksplitcopyphase:marking},
line: \ref{bisimalg:marksplitcopyphase:markerWorker:states-mark}). 

\begin{figure}
\begin{centering}
\includegraphics[scale=0.9]{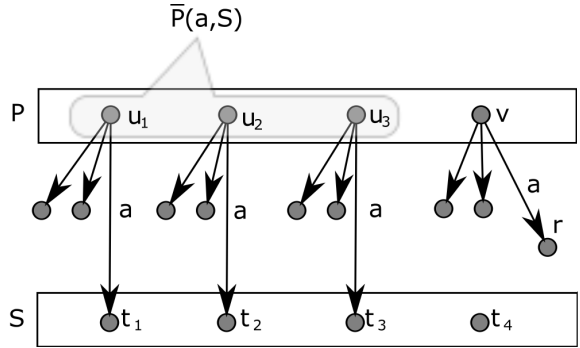}
\par\end{centering}

\caption{Marking step, where $S$ is a splitter, and $P$ violates condition~(Eq.~\ref{eq:bcond}).}

\label{fig:markingPhase}
\end{figure}

Following \citet{PaigeTarjan1987tpra} the block $P$ is said to be
\emph{a-stable} with respect to $S$ if there exists $a\in\sig(P)$
such that $\overline{P}(a,S)=P$, and it is said to be \emph{stable}
with respect to $S$ if it is \emph{a-stable} with respect to $S$
for all $a\in\sig(P)$. Similarly, the partition $\mathcal{P}$ is
said to be stable with respect to the block $S$ if all $P\in\mathcal{P}$
are stable with respect to $S$, and the partition $\mathcal{P}$
is stable with respect to the partition $\mathcal{O}$ if $\mathcal{P}$
is stable with respect to every $S\in\mathcal{O}$. 

Definition (Eq. \ref{eq:a-pred-of-S-with-respect-to-P-def}) implies
that $\overline{P}(a,R)\subseteq P$. Thus, to decide about \emph{a-stability,}
it is enough to compare the cardinality of $\overline{P}(a,R)$ and
$P$ (Listing: \ref{listing:bisimalg:marksplitcopyphase:marking},
line: \ref{bisimalg:marksplitcopyphase:markerWorker:split_need_ident}).
If $|P|\neq|\overline{P}(a,R)|$, or more precisely $|P|>|\overline{P}(a,R)|$,
then $P$ needs to be split, and $\overline{P}(a,R)$ is a potential
candidate for the new block. 

\IncMargin{1em}  
\LinesNumbered  
\begin{algorithm}[h]

\texttt{Marking($S$)}

\texttt{~~}\emph{splitsMap}\texttt{ $\leftarrow\varnothing$ \label{bisimalg:marksplitcopyphase:markerWorker:splitMapDef}}

\texttt{~~parallel for $s\in S$ \label{bisimalg:marksplitcopyphase:markerWorker:foreachv-in-S}}

\texttt{~~~~parallel for $u\in\ine(s)$ \label{bisimalg:marksplitcopyphase:markerWorker:foreachS:begin}}

\texttt{~~~~~~}\emph{pm}\texttt{ $\leftarrow(L(u,v),$ }\emph{stateToBlockId}(\emph{$\id(u)$}))\texttt{
\label{bisimalg:marksplitcopyphase:markerWorker:foreachS:partMarker}}

\texttt{~~~~~~}\emph{splitsMap.updateValueSet(pm, u)}\texttt{\label{bisimalg:marksplitcopyphase:markerWorker:foreachS:stop}}

\texttt{~~parallel for $pm\in$ }\emph{splitsMap.keys()}\texttt{\label{bisimalg:marksplitcopyphase:markerWorker:foreach-splits-keys-begin}}

\texttt{~~~~$B\leftarrow$ }\emph{blockById.get(second(pm))}\texttt{\label{bisimalg:marksplitcopyphase:markerWorker:fetching_P}}

\texttt{~~~~if $|B|>1$ and $|B|>$ $|$}\emph{splitsMap.get(pm)}\texttt{$|$\label{bisimalg:marksplitcopyphase:markerWorker:split_need_ident}}

\texttt{~~~~~~parallel for $u\in$ }\emph{splitsMap.get(pm)}\texttt{
\label{bisimalg:marksplitcopyphase:markerWorker:splits-map-iter}}

\texttt{~~~~~~~~}\emph{mark(}\texttt{$u,B$}\emph{)}\texttt{\label{bisimalg:marksplitcopyphase:markerWorker:states-mark}}

\texttt{~~~~~~$\mathcal{M}\leftarrow\mathcal{M}\cup\{B\}$\label{bisimalg:marksplitcopyphase:markerWorker:foreach-splits-keys-end}}

\NoCaptionOfAlgo
\medskip \caption{\textbf{Listing \thealgocf: } Bisimulation algorithm - marking routine}

\label{listing:bisimalg:marksplitcopyphase:marking}

\end{algorithm}
\DecMargin{1em} 

The marking loop (Listing: \ref{listing:bisimalg:marksplitcopyphase},
Lines: \ref{code:MarkSplitCopyPhase:mainLoop:confor-call-1}-\ref{code:MarkSplitCopyPhase:mainLoop:MarkingCall})
assumes that the set of splitters $\mathcal{S}$ is previously known
(after the initialization phase $\mathcal{S}=\mathcal{P}$). Hence,
it takes one element $S$ from $\mathcal{S}$ (Listing: \ref{listing:bisimalg:marksplitcopyphase},
line: \ref{code:MarkSplitCopyPhase:mainLoop:MarkingCall}) and processes
it inside the \emph{Marking() }procedure. \emph{Marking()} starts
from calculating all the $a$-predecessors of $S$. For this purpose,
it traverses all the incoming edges into S (Listing: \ref{listing:bisimalg:marksplitcopyphase:marking},
Lines: \ref{bisimalg:marksplitcopyphase:markerWorker:foreachv-in-S},
\ref{bisimalg:marksplitcopyphase:markerWorker:foreachS:begin}) and
creates and fills \emph{splitsMap} - the auxiliary concurrent hash
map. The map as a key takes the pair in the form%
\footnote{For the purpose of the algorithm it is assumed that every block is
identified by its unique, integer id $id(P)$. The mappings between
the block and id are provided by the auxiliary maps: \emph{stateToBlockId}
and \emph{blockById}.%
} $(a,\id(P))$, which corresponds to the value $\overline{P}(a,S)$.
Iterating through the keys of \emph{splitsMap}%
\footnote{To facilitate iteration through the key's set in a map, it is useful
to keep all the keys in a separate linked list. For instance, \emph{Java$\texttrademark$}
provides \emph{LinkedHashMap} objects that combine fast random access
to key-value pairs with efficient key traversing. %
} (Listing: \ref{listing:bisimalg:marksplitcopyphase:marking}, Lines:
\ref{bisimalg:marksplitcopyphase:markerWorker:foreach-splits-keys-begin}
- \ref{bisimalg:marksplitcopyphase:markerWorker:foreach-splits-keys-end})
allows the routine for determining predecessors of $S$ (Listing:
\ref{listing:bisimalg:marksplitcopyphase:marking}, line: \ref{bisimalg:marksplitcopyphase:markerWorker:fetching_P})
and checks whether they preserve (Eq. \ref{eq:bcond}) (Listing: \ref{listing:bisimalg:marksplitcopyphase:marking},
line: \ref{bisimalg:marksplitcopyphase:markerWorker:split_need_ident}).
Every predecessor of $S$, which violates (Eq. \ref{eq:bcond}), i.e.
it contains two or more non-bisimilar states, is added to $\mathcal{M}$,
and the states from $\overline{P}(a,S)\subsetneq P$ become marked
(Listing: \ref{listing:bisimalg:marksplitcopyphase:marking}, Lines:
\ref{bisimalg:marksplitcopyphase:markerWorker:splits-map-iter} -
\ref{bisimalg:marksplitcopyphase:markerWorker:states-mark}).

\subsubsection{Splitting\label{sub:Splitting}}

When all the blocks from $\mathcal{S}$ are examined, then the set
$\mathcal{M}$ contains all the blocks identified as requiring division.
Moreover, with every block $M\in\mathcal{M}$ there is a set of marked
states $ms(M)$ assigned. To split the block $M$ only the states
from $ms(M)$ need to be processed. In the simplest case, $M$ is
divided into two parts $M\backslash ms(M)$ and $ms(M)$. In fact,
the set $ms(M)$ can be populated due to the many different splitters,
hence $ms(M)$ can be further subdivided into smaller subsets. Therefore,
in practice, $M$ is split into $M\backslash ms(M)$ and some number
of sub-blocks formed from $ms(M)$. 

In order to split the block $M\in\mathcal{M}$ first for every $v\in ms(M)$,
the state marker
\begin{equation}
m(v)\overset{df}{=}\{(l,\{id(P_{1}),\ldots,id(P_{k})\})|l=L(v,u)\wedge(v,u)\in\mathsf{T}\wedge u\in P_{i}\}\label{eq:state_mareker_def}
\end{equation}
is computed (Listing: \ref{listing:bisimalg:marksplitcopyphase:Splitting},
Lines: \ref{bisimalg:marksplitcopyphase:splitterWorker:markerComp-start}
- \ref{bisimalg:marksplitcopyphase:splitterWorker:markerComp-end}).
Then, states are grouped according to their markers in the \emph{subBlocksMap.
}

It is easy to observe%
\footnote{Note that the lack of stability would imply the existence of $P_{new}$
and $R$ so that $\overline{P}_{new}(l,R)\neq P_{new}$. Then there
would have to be $u\in P_{new}$ so that they would lead through the
transition $(u,t)$ labelled by $l$ to another block $P'$ different
than $P_{new}$, and would be at least one $v\in P_{new}$ that would
lead through the transition labelled by $l$ to $P_{new}$. However,
then $m(u)\neq m(v)$ therefore either $u$ or $v$ is not in $P_{new}$.
Contradiction.%
} that every new sub-block $P_{new}$ stored as the value in \emph{subBlocksMap}
is stable with respect to every splitter $S$ which was in $\mathcal{S}$.
Therefore, the initial $M$ will be replaced in $\mathcal{P}$ by
all the newly formed blocks stored as the values in \emph{subBlockMap}
(see Listing: \ref{listing:bisimalg:marksplitcopyphase:Splitting},
Lines: \ref{bisimalg:marksplitcopyphase:splitterWorker:allButTheLargestOneCond1}-\ref{bisimalg:marksplitcopyphase:splitterWorker:allButTheLargestOneCond1_addM},
\ref{bisimalg:marksplitcopyphase:splitterWorker:allButTheLargestOneCond2}-\ref{bisimalg:marksplitcopyphase:splitterWorker:aux-str-update})
and the block $M\backslash ms(M)$ (see: Listing: \ref{listing:bisimalg:marksplitcopyphase:Splitting},
line: \ref{bisimalg:marksplitcopyphase:splitterWorker:stateRemoval}).
Through this update operation, after the completion of the processing
elements stored in $\mathcal{M}$, the partition $\mathcal{P}$ becomes
stable with respect to its previous version, as it was used at the
beginning of the \emph{Mark-Split-Copy} loop. Of course, there is
no guarantee that it is stable with respect to itself. Therefore the
\emph{Mark-Split-Copy} loop is repeated until $\mathcal{M}\neq\varnothing$. 

If, after the reduction (Listing: \ref{listing:bisimalg:marksplitcopyphase:Splitting},
Line: \ref{bisimalg:marksplitcopyphase:splitterWorker:stateRemoval}),
$M$ becomes empty, it is removed from $\mathcal{P}$ (Listing: \ref{listing:bisimalg:marksplitcopyphase:Splitting},
line: \ref{bisimalg:marksplitcopyphase:splitterWorker:emptyM-removal}),
otherwise it remains in $\mathcal{P}$. If $M$ is not empty, then
the associated set $ms(M)$ of marked states is emptied (Listing:
\ref{listing:bisimalg:marksplitcopyphase:Splitting}, line: \ref{bisimalg:marksplitcopyphase:splitterWorker:clearingMarkedStatesSet}).
All the auxiliary structures are updated at the end of the procedure
(Listing: \ref{listing:bisimalg:marksplitcopyphase:Splitting}, line:
\ref{bisimalg:marksplitcopyphase:splitterWorker:aux-str-update}). 

\IncMargin{1em}  
\LinesNumbered  
\begin{algorithm}[h]

\texttt{Splitting($M$)}

\texttt{~~}\emph{subBlocksMap}\texttt{ $\leftarrow\varnothing$
\label{bisimalg:marksplitcopyphase:splitting:init-1} }

\texttt{~~$ $}\emph{markersMap}\texttt{ $\leftarrow\varnothing$
\label{bisimalg:marksplitcopyphase:splitting:init-2}}

\texttt{~~parallel for $v\in ms(M)$\label{bisimalg:marksplitcopyphase:splitterWorker:markerComp-start} }

\texttt{~~~~}\emph{markersMap.put}\texttt{$(v,\{(l,\{i_{1},\ldots,i_{k}\})|l=L(v,u)\wedge$}

\texttt{~~~~~~$\wedge i_{j}=$ }\emph{stateToBlockId}($\id(u)$)\texttt{$\})$\label{bisimalg:marksplitcopyphase:splitterWorker:markerComp-end}}

\texttt{~~parallel for $v\in ms(M)$ \label{bisimalg:marksplitcopyphase:splitterWorker:groupingByMark-start}}

\texttt{~~~~$m\leftarrow$}\emph{ markersMap.get($v$)}

\texttt{~~~~$ $}\emph{subBlocksMap.updateValueSet($m,v$)}\texttt{\label{bisimalg:marksplitcopyphase:splitterWorker:groupingByMark-end}}

\texttt{~~$M\leftarrow M\backslash\ms(M)$\label{bisimalg:marksplitcopyphase:splitterWorker:stateRemoval}}

\texttt{~~if }\emph{this is the second and subsequent pass of MarkSplitCopy
loop}\texttt{ \label{bisimalg:marksplitcopyphase:splitterWorker:theLPFinding}}

\texttt{~~~~$\smax$ $\leftarrow$ $\max\{$}\emph{subBlockMap.values}\texttt{$,M\}$\label{bisimalg:marksplitcopyphase:splitterWorker:choosingTheLargestPart}}

\texttt{~~else \label{bisimalg:marksplitcopyphase:splitterWorker:else}}

\texttt{~~~~$\smax$ $\leftarrow$ null}

\texttt{~~if ($M\neq\varnothing)$}

\texttt{~~~~$\ms(M)\leftarrow\varnothing$\label{bisimalg:marksplitcopyphase:splitterWorker:clearingMarkedStatesSet}}

\texttt{~~~~if ($M\neq\smax$) \label{bisimalg:marksplitcopyphase:splitterWorker:allButTheLargestOneCond1}}

\texttt{~~~~~~$ $add($\mathcal{S},M$)\label{bisimalg:marksplitcopyphase:splitterWorker:allButTheLargestOneCond1_addM}}

\texttt{~~else}

\texttt{~~~~remove($\mathcal{P},M$)\label{bisimalg:marksplitcopyphase:splitterWorker:emptyM-removal}}

\texttt{~~parallel for $B\in$ }\emph{subBlocksMap.values}\texttt{
\label{bisimalg:marksplitcopyphase:splitterWorker:allButTheLargestOneCond2}}

\texttt{~~~~if $\, B\neq\smax$ \label{bisimalg:marksplitcopyphase:splitterWorker:SplitterUpdate:if}}

\texttt{~~~~~~add($\mathcal{S},B$)\label{bisimalg:marksplitcopyphase:splitterWorker:SplitterUpdate}}

\texttt{~~~~add($\mathcal{P},B$)\label{bisimalg:marksplitcopyphase:splitterWorker:PartitionUpdate}}

\texttt{~~~~update }\emph{blockById}\texttt{, }\emph{nextStateToBlockId}\texttt{\label{bisimalg:marksplitcopyphase:splitterWorker:aux-str-update}}

\NoCaptionOfAlgo
\medskip \caption{\textbf{Listing \thealgocf: } Bisimulation algorithm - splitter routine}

\label{listing:bisimalg:marksplitcopyphase:Splitting}

\end{algorithm}
\DecMargin{1em}

\subsubsection{Process all but the largest one \label{sub:Process-all-but-the-largest-one}}

During the first turn of the \emph{Mark-Split-Copy} loop $\mathcal{P}=\mathcal{S}$,
i.e. all the blocks formed in the first initial phase are used as
splitters. In other words, the \emph{Marking()} procedure needs to
test the stability of every block $P\in\mathcal{P}$ with respect
to any other block in $\mathcal{P}$. During the second execution
and the subsequent ones, stability needs to be examined only with
respect to the changed blocks. Moreover, not all altered blocks need
to be added to the splitter set. That is because of one important
improvement proposed by \citep{PaigeTarjan1987tpra}, which follows
\emph{Hopcroft's} algorithmic strategy \emph{``process the smaller
half''} \citep{Aho1974tdaa,Hopcroft1971anln}. However, due to the
concurrent nature of this algorithm, \emph{Hopcroft's} principle cannot
be applied directly and needs to be ``parallelized''. Thus, the
new version of this principle ``\emph{Process all but the largest
one'' }means that all the blocks are formed as \emph{subBlockMap}
values, and $M$ after state removal (see Listing: \ref{listing:bisimalg:marksplitcopyphase:Splitting},
Lines: \ref{bisimalg:marksplitcopyphase:splitterWorker:allButTheLargestOneCond1},
\ref{bisimalg:marksplitcopyphase:splitterWorker:allButTheLargestOneCond2})
except the largest one of them, are added to $\mathcal{S}$. This
optimization is possible because every block $M$, being a subject
of \emph{Splitting(),} was the splitter itself in the past or is the
result of splitting a block that was a splitter. 

To illustrate the use of the \emph{process all but the largest one}
strategy, let us consider the case where, after the $k$-th pass of
the \emph{Mark-Split-Copy} loop $(k>1)$, $\mathcal{P}$ contains
blocks $P_{1},P_{2}$ so that $\oute(P_{1})\cap R_{1}\neq\varnothing$
and $\oute(P_{2})\cap R_{1}\neq\varnothing$, and there exists $R_{0}:R_{1}\subseteq R_{0}$
so that $R_{0}$ was used previously as a splitter . Then $P_{1}$
and $P_{2}$ are stable with respect to $R_{1}$, i.e. it holds that
$\overline{P}_{1}(a,R_{1})=P_{1}$ and $\overline{P}_{2}(b,R_{1})=P_{2}$
(Fig. \ref{fig:mark-split-copy-example-1}). 

\begin{figure}
\begin{centering}
\includegraphics[scale=0.8]{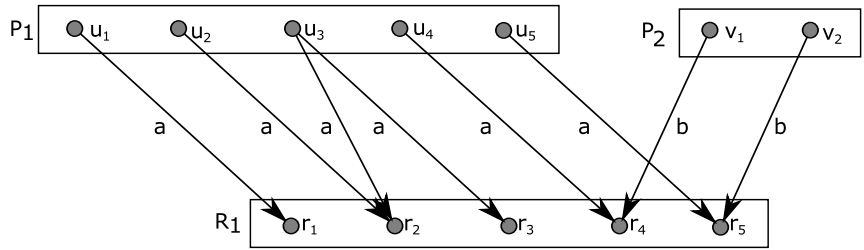}
\par\end{centering}

\caption{Mark-Split-Copy loop example - blocks $P_{1}$ and $P_{2}$ are stable
with respect to $R_{1}$}

\label{fig:mark-split-copy-example-1}
\end{figure}

Unfortunately, during the \emph{Split} step, the block $R_{1}$ is
divided into three different blocks $R_{11},R_{12}$ and $R_{13}$,
thus at the end of the $k$-th pass of the loop instead of $R_{1}$,
there are $R_{11},R_{12}$ and $R_{13}$ in $\mathcal{P}$. The stability
property is lost i.e. $\overline{P}_{1}(a,R_{11})\subsetneq P_{1}$,
$\overline{P}_{1}(a,R_{12})\subsetneq P_{1}$, $\overline{P}_{1}(a,R_{13})\subsetneq P_{1}$,
$\overline{P}_{2}(b,R_{12})\subsetneq P_{2}$ and $\overline{P}_{2}(b,R_{13})\subsetneq P_{2}$.
Therefore, it must be restored. Thus, according to the strategy of
\emph{process all but the largest one,} $R_{12}$ and $R_{13}$ were
added to the set of splitters $\mathcal{S}$ ($R_{11},R_{13}$ would
be equally good). Then, during the subsequent \emph{Mark} step ($k+1$-th
pass of the loop) both blocks $P_{1}$ and $P_{2}$ are added to $\mathcal{M}$,
and the vertices $u_{3},u_{4},u_{5},v_{1},v_{2}$ are marked (Listing:
\ref{listing:bisimalg:marksplitcopyphase:marking}, Lines: \ref{bisimalg:marksplitcopyphase:markerWorker:states-mark}
- \ref{bisimalg:marksplitcopyphase:markerWorker:foreach-splits-keys-end}).
Then, the blocks $P_{1}$ and $P_{2}$ are processed by the \emph{Splitting()}
procedure (Listing: \ref{listing:bisimalg:marksplitcopyphase:Splitting})
and the markers $\mv$ for the marked vertices are computed (Listing:
\ref{listing:bisimalg:marksplitcopyphase:Splitting}, Lines: \ref{bisimalg:marksplitcopyphase:splitterWorker:markerComp-start}
- \ref{bisimalg:marksplitcopyphase:splitterWorker:markerComp-end}).
There are $m(u_{3})=\{(a,\{R_{1},R_{2}\})\}$, $m(u_{4})=\{(a,\{R_{2}\})\}$,
$m(u_{5})=\{(a,\{R_{3}\})\}$, $m(v_{1})=\{(b,\{R_{2}\})\}$, $m(v_{2})=\{(b,\{R_{3}\})\}$
(Figure: \ref{fig:mark-split-copy-example-2}). 

\begin{figure}
\begin{centering}
\includegraphics[scale=0.8]{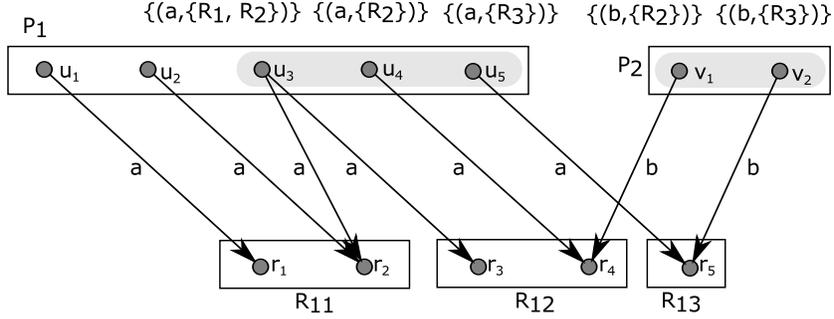}
\par\end{centering}

\caption{\emph{Mark-Split-Copy} loop example - blocks $P_{1}$ and $P_{2}$
are added to $\mathcal{M}$. Markers for the marked vertices (vertices
on the grayed background) are determined.}

\label{fig:mark-split-copy-example-2}
\end{figure}

Next, the blocks $P_{1}$ and $P_{2}$ are split, so that $P_{11}\leftarrow P_{1}\backslash ms(P_{1})$,
$P_{12}=\{u\in P_{1}|m(u)=\{(a,\{R_{1},R_{2}\})\}\}$, $P_{13}=\{u\in P_{1}|m(u)=\{(a,\{R_{2}\})\}\}$,
$P_{14}=\{u\in P_{1}|m(u)=\{(a,\{R_{3}\})\}\}$, $P_{21}=\{v\in P_{2}|m(v)=\{(b,\{R_{2}\})\}\}$
and finally $P_{22}=\{v\in P_{2}|m(v)=\{(b,\{R_{3}\})\}\}$ (Figure:
\ref{fig:mark-split-copy-example-3}). The blocks $P_{1}$ and $P_{2}$
are replaced in $\mathcal{P}$ by $P_{11},P_{12},P_{13},P_{14},P_{21},P_{22}$
(Listing: \ref{listing:bisimalg:marksplitcopyphase:Splitting}, Lines:
\ref{bisimalg:marksplitcopyphase:splitterWorker:emptyM-removal},
\ref{bisimalg:marksplitcopyphase:splitterWorker:PartitionUpdate}),
and once again $P_{1i},P_{2j}$ blocks are stable with respect to
the corresponding splitter blocks $R_{1i}$. Then, according to the
proposed strategy $P_{12},P_{13},P_{14}$ and $P_{22}$ are added
to $\mathcal{S}$ (Listing: \ref{listing:bisimalg:marksplitcopyphase:Splitting},
line: \ref{bisimalg:marksplitcopyphase:splitterWorker:SplitterUpdate}). 

\begin{figure}
\begin{centering}
\includegraphics[scale=0.8]{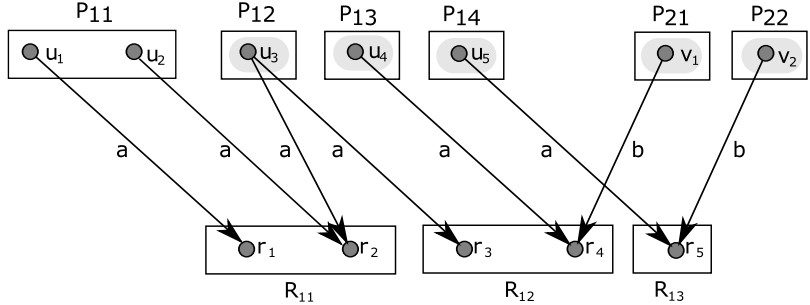}
\par\end{centering}

\caption{\emph{Mark-Split-Copy} loop example - blocks $P_{1}$ and $P_{2}$
are split}

\label{fig:mark-split-copy-example-3}
\end{figure}

Let us note that the markers for $u_{1}$ and $u_{2}$ (although they
are not computed) are $m(u_{1})=m(u_{2})=\{(a,\{R_{12}\})\}$. Thus,
$P_{11}$ could be defined as $P_{11}=\{u\in P|m(u)=\{(a,\{R_{12}\})\}$.
It is possible because the marker for $u_{1}$ and $u_{2}$ is different
from all other element markers in $P_{1}$. Thus, $\overline{P}_{11}(a,R_{11})=P_{11}$.
In other words, even if $R_{11}$ had been added to the splitters
set $\mathcal{S}$ the partitioning of $P_{1}$ would not change.
Regularity observed in the example can be explained more formally. 
\begin{theorem}
If the blocks $P_{1},\ldots,P_{k}$ are split with respect to the
splitter $R$, and it holds that $R$ was a part of a splitter block
in the past, the arbitrary selected sub-block of $R$ does not need
to be added to the splitter set $\mathcal{S}$. Of course, due to
performance reasons, the largest block is always omitted.\end{theorem}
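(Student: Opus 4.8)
The plan is to recast the statement as the multi-way generalisation of the \emph{Paige--Tarjan} stability lemma and to prove it through the state markers of (Eq.~\ref{eq:state_mareker_def}). Let $C$ be the block that, now or in the past, served as a splitter, and let $C_{1},\dots,C_{m}$ be the children into which it is divided during \emph{Splitting()} (the values of \emph{subBlocksMap} together with the remainder $C\setminus\ms(C)$). Following \emph{``process all but the largest one''} we retain all children except $C_{j}=\smax$. The blocks $P_{1},\dots,P_{k}$ of the statement are exactly the predecessors that become refined once the retained children $C_{i}$ ($i\neq j$) are drawn from $\mathcal{S}$. The single structural fact I would rely on is that, because $C$ was a splitter, $\mathcal{P}$ is stable with respect to the union $C=\bigcup_{i}C_{i}$; as every $P_{l}$ is contained in a block that was stable with respect to $C$, each $P_{l}$ inherits $\overline{P_{l}}(a,C)\in\{P_{l},\varnothing\}$ for every label $a$. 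My goal is then to show that one \emph{Mark-Split-Copy} turn driven by the retained children already leaves $\mathcal{P}$ stable with respect to $C_{j}$, so that adding $C_{j}$ to $\mathcal{S}$ could not split anything further.

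For the blocks produced from \emph{marked} states the argument is immediate and is the one sketched in the footnote of (Sec.~\ref{sub:Splitting}). States are grouped in \emph{subBlocksMap} strictly by the full marker $m(\cdot)$, and by (Eq.~\ref{eq:state_mareker_def}) this marker records, for every label, the \emph{entire} set of block identifiers reached, $C_{j}$ included, since it is evaluated against the current \emph{stateToBlockId}. Hence any two $u,v$ landing in the same new block satisfy $m(u)=m(v)$, so they reach exactly the same blocks under each label and in particular agree on whether an $a$-edge enters $C_{j}$. This gives $\overline{P}(a,C_{j})\in\{P,\varnothing\}$ for every such $P$ and every $a\in\sig(P)$, i.e. $a$-stability with respect to $C_{j}$.

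It remains to treat the unmarked remainder of each refined $P_{l}$, and this is where I would invoke stability with respect to the parent $C$. Fix a label $a$ for which the retained children split $P_{l}$ properly, so that a remainder state $w$ reaches none of the $C_{i}$ with $i\neq j$ under $a$. If $\overline{P_{l}}(a,C)=\varnothing$ then $w$ reaches no child at all and $\overline{P}(a,C_{j})=\varnothing$; if instead $\overline{P_{l}}(a,C)=P_{l}$ then $w$ reaches $C$ but none of the $C_{i}$ ($i\neq j$), forcing an $a$-edge into $C_{j}$, so $\overline{P}(a,C_{j})=P$. Either way the remainder is $a$-stable with respect to $C_{j}$. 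Quantifying over all $a\in\sig(P_{l})$ and all refined $P_{l}$ yields global stability with respect to $C_{j}$, which is precisely the assertion that $C_{j}$ need not be inserted into $\mathcal{S}$; the particular choice $C_{j}=\smax$ is then only a performance optimisation.

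The step I expect to be the main obstacle is the interaction between multiple successors and the remainder. When a single state carries $a$-edges into several children at once, comparing only ``reaches $C_{i}$ or not'' for the retained $i\neq j$ is too coarse to determine the behaviour at $C_{j}$, and the clean dichotomy used above presupposes that the retained children split $P_{l}$ \emph{properly} at $a$. The delicate case is a label $a$ with $\overline{P_{l}}(a,C_{i})=P_{l}$ for some retained child: there no marking occurs on account of $C_{i}$, yet the remainder might still disagree on $C_{j}$. Closing this case is exactly where the proof must use that $m(\cdot)$ stores the \emph{full} reached-set rather than a single bit, together with parent-stability with respect to $C$, and I would verify carefully that no two remainder states can share a marker while differing on their $C_{j}$-edges; in the deterministic reading of the \emph{LTS}, where each state has at most one $a$-successor, this reduces to the original \emph{Paige--Tarjan} three-way split and the obstacle disappears.
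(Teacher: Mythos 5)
Your first case---blocks assembled from \emph{marked} states---is sound, and it is essentially the same argument as the paper's footnote in Section~\ref{sub:Splitting} and the core of the paper's own proof: since the marker of (Eq.~\ref{eq:state_mareker_def}) records, per label, the \emph{full} set of reached block identifiers (the omitted child $C_{j}$ included), two states sharing a marker cannot disagree on their $a$-edges into $C_{j}$. The genuine gap is exactly the ``delicate case'' you flag at the end and do not close, and it cannot be closed the way you hope. If some retained child $C_{i}$ satisfies $\overline{P}_{l}(a,C_{i})=P_{l}$, then \emph{Marking()} marks nothing on account of $(a,C_{i})$, because the test $|B|>|\emph{splitsMap.get(pm)}|$ (Listing:~\ref{listing:bisimalg:marksplitcopyphase:marking}) fails; consequently \emph{Splitting()} never computes markers for these states, since markers are built only for $v\in\ms(M)$. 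Your proposed repair---``use that $m(\cdot)$ stores the full reached-set''---therefore has nothing to operate on: for remainder states the marker simply does not exist. Nor can parent-stability substitute for it: under nondeterminism one state may carry an $a$-edge into $C_{i}$ \emph{and} a second $a$-edge into $C_{j}$, while another state of $P_{l}$ has an $a$-edge into $C_{i}$ only; both reach $C$ and both reach $C_{i}$, so neither parent-stability nor stability with respect to every retained child separates them, yet they disagree on $C_{j}$.

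This case is not merely unfinished, it is fatal to the statement as it stands. Take $P=\{u,v\}$ with $u\overset{a}{\rightarrow}x$, $u\overset{a}{\rightarrow}y$, $v\overset{a}{\rightarrow}z$, where $x,z$ end up in the retained child $C_{2}$ and $y$ in the omitted largest child $C_{1}$ of a parent $C$ that was an initial block (hence a first-round splitter), the separation of $C$ being driven by downstream edges. Then $P$ is stable with respect to $C$ and with respect to $C_{2}$, so no pass ever marks $P$, and the algorithm halts with $u,v$ in one block even though $u\not\sim v$ (the move $u\overset{a}{\rightarrow}y$ cannot be matched by $v$). This is precisely why Paige--Tarjan's ``process the smaller half'' requires the counting-based three-way split rather than stability inheritance alone; your closing remark that the obstacle disappears for deterministic systems is correct, but that is Hopcroft's setting, not the paper's. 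You should also know that the paper's own proof commits the same error you were honest about: its step ``every $R_{2},\ldots,R_{l}$ was processed as a splitter, hence the state $p$ has been assigned to a new block $B$ according to their label $m(p)$'' is unjustified, because a predecessor of a processed splitter is reassigned only when that splitter splits its block \emph{properly}. So your write-up is incomplete, but it isolates the one step that actually fails---something the paper's proof does not do.
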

\begin{proof}
Let $P_{1},\ldots,P_{k}$ be initially stable with respect to some
$R$, i.e. for every $a\in\mathsf{A}$ it holds that $\overline{P}_{i}(a,R_{1})=P_{i}$.
Let us assume that, as the result of the algorithm, $R$ is divided
into $R_{1}\cup\ldots\cup R_{r}$ and every $P_{i}$ is divided into
disjoint $P_{i1}\cup\ldots\cup P_{iq_{i}}$. It holds that for every
splitter, $R_{2},\ldots,R_{r}$ and every $P_{i1},\ldots,P_{iq_{i}}$
is either $\overline{P}_{ij}(a,R_{l})=P_{ij}$ or $\overline{P}_{ij}(a,R_{l})\cap P_{ij}=\varnothing$.
For the purpose of contradiction, let us assume that $R_{1}$, as
the largest block, was not considered as a splitter, hence $\overline{P}_{\hat{i}\hat{j}}(a,R_{1})\neq P_{\hat{i}\hat{j}}$
and $\overline{P}_{\hat{i}\hat{j}}(a,R_{1})\cap P_{\hat{i}\hat{j}}\neq\varnothing$.
Of course, $\overline{P}_{\hat{i}\hat{j}}(a,R_{1})\subseteq P_{\hat{i}\hat{j}}$
then, there must exist some state of $p$ such that $p\in P_{\hat{i}\hat{j}}$
and $p\notin\overline{P}_{\hat{i}\hat{j}}(a,R_{1})$. Since, initially
for every $a\in\mathsf{A}$ it holds that $\overline{P}_{i}(a,R_{1})=P_{i}$,
this means that there must be a transition $p\overset{a}{\rightarrow}q$
where $q\in P_{i}$. However, $p\notin\overline{P}_{\hat{i}\hat{j}}(a,R_{1})$
implies that $q\notin$ $R_{1}$. This means that there is $R_{l}$
and $l=2,\ldots,r$ such that $q\in R_{l}$. But every $R_{2},\ldots,R_{l}$
was processed as a splitter, hence the state $p$ has been assigned
to a new block $B$ according to their label $m(p)$. Since $\overline{P}_{\hat{i}\hat{j}}(a,R_{1})\cap P_{\hat{i}\hat{j}}\neq\varnothing$
let $\hat{p}$ such that $\hat{p}\in\overline{P}_{\hat{i}\hat{j}}(a,R_{1})\cap P_{\hat{i}\hat{j}}$.
Following the reasoning above, it is easy to see (for the same reasons
for which $p\in B$) that $\hat{p}\notin B$. Thus $B\neq P_{\hat{i}\hat{j}}$.
However, due to the nature of the algorithm, the blocks after splitting
do not overlap each other, thus $B\cap P_{\hat{i}\hat{j}}=\varnothing$.
In other words, the assumption $p\in B$ implies $p\notin P_{\hat{i}\hat{j}}$.
Contradiction. 
\end{proof}

\subsubsection{Copy }

The third part of the \emph{Mark-Split-Copy} loop is the \emph{Copy()}
routine (Listing: \ref{listing:bisimalg:marksplitcopyphase:copyWorker}).
It is responsible for swapping the \emph{nextStateToBlockId} and \emph{stateToBlockId}.
The copy step is required because the \emph{Splitting()} procedure
(Listing: \ref{listing:bisimalg:marksplitcopyphase:Splitting}) takes
the information about the current block assignments from \emph{stateToBlockId,}
whilst the new assignments used in the next iteration are stored in
\emph{nextStateToBlockId}. Such a solution reduces the locking overhead
when accessing the \emph{stateToBlockId} structure.

\section{Sequential complexity\label{sec:Sequential-complexity}}

One of the most important factors affecting the running time of the
algorithm are data structures. Very often, the same procedure can
run at different speeds for different data structures. The single-threaded
algorithm can run at full speed using more memory-efficient but sequential
data structures. The concurrent version of the same algorithm needs
highly parallel, but more memory consuming, data representation. Therefore,
the data structures used for the purpose of the analysis of the sequential
running time differ from those used for the purpose of the analysis
of the concurrent implementation. The current section deals with the
data structures suitable for the sequential processing, whilst the
concurrent implementation is discussed in (Sec. \ref{sub:Concurrent-algorithm}).

In the presented approach, the state and the block are the structures
that have their own unique \emph{id} automatically assigned to them
during creation. Thus, fetching \emph{id} for states and blocks takes
$O(1)$ of time. The mapping of \emph{ids} to blocks is provided by
the global map \emph{blockById} (Listing: \ref{listing:bisimalg:structures}).
Hence, fetching a block when its \emph{id} is known takes $O(1)$
on average \citep{Cormen2009ita}. The next two linked maps \emph{stateToBlockId}
and \emph{nextStateToBlockId} keep the membership relationship between
states (precisely their \emph{ids}) and blocks (the block \emph{ids}).
In other words, every state refers to the block to which it belongs.
Splitter blocks and blocks marked\emph{ }to be split are stored in
the queues $\mathcal{S}$ and $\mathcal{M}$ correspondingly (Listing
\ref{listing:bisimalg:structures}). Therefore, both operations \emph{add}
and \emph{poll} for $\mathcal{S}$ and $\mathcal{M}$ can be implemented
in the expected constant time $O(1)$. An \emph{initPartition} structure
can be implemented as a dynamic perfect hash map \citep{Dietzfelbinger1994dphu},
which maintains a double-linked list of its keys (\emph{linked hash
map}). It combines the ability to handle inserts and lookups in the
amortized time $O(1)$ together with fast iteration through the structure.
The keys stored within the \emph{initPartition} should be implemented
as a perfect hash map. Thus the key comparison can be performed in
the expected running time proportional to the length of the shorter
one. 

The set $\mathcal{P}$ represents (Listing \ref{listing:bisimalg:structures})
the current partition. In fact, maintaining $\mathcal{P}$ is superfluous.
That is because it can be easily obtained at the end of the algorithm
in $O(|\mathsf{S}|)$ by traversing \emph{stateToBlockId}, and no
decision during the course of the algorithm depends on $\mathcal{P}$.
Thus, placing it in the pseudocode is purely illustrative and their
updates wont be considered in the context of the overall running time
estimation. The auxiliary maps \emph{splitsMap} (Listing: \ref{listing:bisimalg:marksplitcopyphase:marking},
line: \ref{bisimalg:marksplitcopyphase:markerWorker:splitMapDef})
and \emph{initPartition} are implemented as the\emph{ }linked hash
map. The first one provides an \emph{updateValueSet(pm,~}$u$\emph{)}
method (Listing: \ref{listing:bisimalg:marksplitcopyphase:marking},
line: \ref{bisimalg:marksplitcopyphase:markerWorker:foreachS:stop}),
which adds the state $u$ to the set referred by \emph{pm}, or if
there is no such set, it first creates it, then adds $u$ into it.
Thus, the method can also be implemented in the time $O(1)$. The
sets held by \emph{splitsMap} as its values can also be implemented
as a linked hash map. The fact that some state $u\in B$ is marked
(Listing: \ref{listing:bisimalg:marksplitcopyphase:marking}, line:
\ref{bisimalg:marksplitcopyphase:markerWorker:states-mark}) can be
represented by adding an appropriate state id to the appropriate linked
hash set associated with that block. 

The \emph{Splitting()} routine starts with defining two auxiliary
linked hash maps: \emph{subBlocksMap }and \emph{markersMap}\texttt{$ $}.
As before, implementing them as linked hash maps allows estimation
of the expected running time of its constant operations. Like \emph{splitsMap,}
\emph{subBlocksMap} also provides the method \emph{updateValueSet()}
with the average running time $O(1)$. Items stored in \emph{markersMap}
are in the form of a pair \emph{(label, set of integers)}. The second
component of this pair can also be represented as a linked hash set.

\subsection{Initialization phase}

The initialization phase (Listing: \ref{listing:bisimalg:initialization-threephases})
is composed of three stages: the vertex signature initialization -
\emph{StateSignatureInit()}, the initial partition preparation \emph{-
PartitionInit()}, and the auxiliary structures initialization - \emph{AuxStructInit()}.
The first sub-procedure \emph{StateSignatureInit()} (Listing: \ref{listing:bisimalg:initialization-threephases},
Lines: \ref{code:initialization:VertexSigInit-begin} - \ref{code:initialization:VertexSigInit-end})
iterates through all the \emph{LTS} graph's vertices, and builds signatures
for all of them. The signature creation requires visiting all the
outgoing edges of the vertex. Since, during the course of the whole
initialization phase, each edge of \emph{LTS} is visited once, the
total running time of \emph{StateSignatureInit} is $O(|\mathsf{T}|)$.
The lookup operations (Listing: \ref{listing:bisimalg:initialization-threephases}
, line: \ref{code:initialization:VertexSigInit-contains-sig}) on
\emph{PartitionInit }implemented as the perfect hash map have $O(k(v))$
running time, where $k(v)$ is the length of the $v$ signature%
\footnote{It is assumed that the $\sig$ structure (used as a key in \emph{PartitionInit}
map) has its own hash code incrementally built when $\sig$ is created.
Thus, the expected running time of two $\sig$'s comparison when they
are different is $O(1)$. When two $\sig$'s are the same, the comparison
takes a time proportional to the size of $\sig$.%
} i.e. $k(v)=\left|\sig(v)\right|$. Since every vertex is accessed
once, and the total length of keys are $\underset{v\in\mathsf{S}}{\sum}|\sig(v)|=|\mathsf{T}|$
thus the total expected running time introduced by the keys comparison
(Listing: \ref{listing:bisimalg:initialization-threephases} Lines:
\ref{code:initialization:VertexSigInit-contains-sig} - \ref{code:initialization:VertexSigInit-end})
is at most $O(|\mathsf{T}|)$. The second initialization sub-procedure
\emph{PartitionInit()} (Listing: \ref{listing:bisimalg:initialization-threephases},
Lines: \ref{code:initialization:PartitionInit-begin} - \ref{code:initialization:PartitionInit-end})
iterates through the vertices of the \emph{LTS} graph and performs
one $O(|\sig(v)|)$ hash map lookup (Listing: \ref{listing:bisimalg:initialization-threephases},
line: \ref{code:initialization:PartitionInit-get-sig}) and two constant
time operations (Listing: \ref{listing:bisimalg:initialization-threephases},
Lines: \ref{code:initialization:PartitionInit-block-update} - \ref{code:initialization:VertexSigInit-end}).
Since the total length of the compared state signatures is $|\mathsf{T}|$,
the overall expected running time of \emph{PartitionInit()} is $O(|T|)$. 

Finally, the third sub-procedure \emph{AuxStructInit()} iterates through
the initial partitions (Listing: \ref{listing:bisimalg:initphase},
Lines: \ref{code:initializationPhase:AuxStructInit-iter-begin} -
\ref{code:initializationPhase:AuxStructInit-iter-end}) and adds them
(Listing: \ref{listing:bisimalg:initialization-threephases}, Lines:
\ref{code:initialization:AuxStructInit-begin} - \ref{code:initialization:AuxStructInit-end})
to other hash maps. Since the maximal number of partitions is limited
by $|\mathsf{S}|$ the time complexity of \emph{AuxStructInit()} is
$O(|\mathsf{S}|)$. Therefore, the overall expected sequential running
time of \emph{InitializationPhase()} (Listing: \ref{listing:bisimalg:initphase})
is $O(|\mathsf{T}|)+O(|\mathsf{T}|)+O(|\mathsf{S}|)=O(|\mathsf{T}|)$.

\subsection{MarkSplitCopy phase}

The second phase of the algorithm \emph{MarkSplitCopyPhase()} (Listing:
\ref{listing:bisimalg:marksplitcopyphase}) consists of three repetitive
subsequent steps, marking, splitting and copying. The first two steps
consist of sequences of calls \emph{Marking()} and \emph{Splitting()}.
For clarity of consideration, first the running times of \emph{Marking(),}
Splitting\emph{()} and \emph{Copy()} sub-routines are discussed. Then,
the overall running time of \emph{MarkSplitCopyPhase}() is (Listing:
\ref{listing:bisimalg:marksplitcopyphase}) examined.

\subsubsection{Marking}

The computational complexity of the \emph{Marking()} procedure (Listing:
\ref{listing:bisimalg:marksplitcopyphase:marking}) is determined
by the two loops \emph{for}. The first one (Listing: \ref{listing:bisimalg:marksplitcopyphase:marking},
Lines: \ref{bisimalg:marksplitcopyphase:markerWorker:foreachv-in-S}
- \ref{bisimalg:marksplitcopyphase:markerWorker:foreachS:stop}) is
responsible for \emph{splitsMap} preparation, whilst the second one
(Listing: \ref{listing:bisimalg:marksplitcopyphase:marking}, Lines:
\ref{bisimalg:marksplitcopyphase:markerWorker:foreach-splits-keys-begin}
- \ref{bisimalg:marksplitcopyphase:markerWorker:foreach-splits-keys-end})
shall select the blocks that are eligible for $\mathcal{M}$. Inside
the first loop, there is another \emph{for} loop (Listing: \ref{listing:bisimalg:marksplitcopyphase:marking},
Lines: \ref{bisimalg:marksplitcopyphase:markerWorker:foreachS:begin}
- \ref{bisimalg:marksplitcopyphase:markerWorker:foreachS:stop}) going
through all the states preceding the given state $s\in S$ in the
sense of $\mathsf{T}$. Thus, the line in which the partition marking
is assigned (Listing: \ref{listing:bisimalg:marksplitcopyphase:marking},
Line: \ref{bisimalg:marksplitcopyphase:markerWorker:foreachS:partMarker})
is executed as many times as there are incoming edges into $S$. In
particular, for a single $v\in S$ the loop is executed $O(|\ine(v)|)$
times. Hence, the running time of the first loop is $O(\underset{v\in S}{\sum}|\ine(v)|)$. 

The \emph{splitsMap} structure maps keys in the form of a pair \emph{(label,~partitionid)}
to the values, which are represented by the blocks of states. During
every single pass of the first \emph{for} loop a single state $u$
to some state block, stored\emph{ }as the value in\emph{ splitsMap,}
is added (Listing: \ref{listing:bisimalg:marksplitcopyphase:marking},
line: \ref{bisimalg:marksplitcopyphase:markerWorker:foreachS:stop}).
Thus, the total number of elements in all the blocks stored as the
values in \emph{splitsMap} does not exceed $O(\underset{v\in S}{\sum}|\ine(v)|)$.
For this reason, within the second \emph{for} loop (Listing: \ref{listing:bisimalg:marksplitcopyphase:marking},
Lines: \ref{bisimalg:marksplitcopyphase:markerWorker:foreach-splits-keys-begin}
- \ref{bisimalg:marksplitcopyphase:markerWorker:foreach-splits-keys-end})
the \emph{mark()} routine is called at most $O(\underset{v\in S}{\sum}|\ine(v)|)=O(|\ine(S)|)$
times. Since the running time of \emph{mark()} is $O(1)$, and the
other operations within the second \emph{for} loop can also be implemented
in $O(1)$, then the overall running time of the second \emph{for}
loop, and hence, the result of the entire \emph{Marking()} procedure
(Listing: \ref{listing:bisimalg:marksplitcopyphase:marking}), is
$O(\underset{v\in S}{\sum}|\ine(v)|)$.

\subsubsection{Splitting}

The first two instructions of the \emph{Splitting()} procedure (Listing:
\ref{listing:bisimalg:marksplitcopyphase:Splitting}, Lines: \ref{bisimalg:marksplitcopyphase:splitting:init-1}
- \ref{bisimalg:marksplitcopyphase:splitting:init-2}) are responsible
for the initialization of the auxiliary structures and can be implemented
in $O(1)$. The running time of the loop (Listing: \ref{listing:bisimalg:marksplitcopyphase:Splitting},
Lines: \ref{bisimalg:marksplitcopyphase:splitterWorker:markerComp-start}
- \ref{bisimalg:marksplitcopyphase:splitterWorker:markerComp-end})
depends on the size of $\ms(M)\subseteq M$ and the time required
to create state markers (Listing: \ref{listing:bisimalg:marksplitcopyphase:Splitting},
line: \ref{bisimalg:marksplitcopyphase:splitterWorker:markerComp-end}).
The time needed to create a single marker for $v$ depends on the
size of $\oute(v)$, thus the running time of this loop (Listing:
\ref{listing:bisimalg:marksplitcopyphase:Splitting}, Lines: \ref{bisimalg:marksplitcopyphase:splitterWorker:markerComp-start}
- \ref{bisimalg:marksplitcopyphase:splitterWorker:markerComp-end})
is expected to be $O(\underset{v\in ms(M)}{\sum}|\oute(v)|)\leq O(\underset{v\in M}{\sum}|\oute(v)|)$. 

The next loop \emph{for} (Listing: \ref{listing:bisimalg:marksplitcopyphase:Splitting},
Lines: \ref{bisimalg:marksplitcopyphase:splitterWorker:groupingByMark-start}
- \ref{bisimalg:marksplitcopyphase:splitterWorker:groupingByMark-end})
depending on $\ms(M)$ has the running time%
\footnote{The only exception is the \emph{updateValueSet()} method call, which
is discussed later on.%
} $O(|\ms(M)|)\leq O(|M|)$. Similarly, removing the marked vertices
from $M$ (Listing: \ref{listing:bisimalg:marksplitcopyphase:Splitting},
line: \ref{bisimalg:marksplitcopyphase:splitterWorker:stateRemoval})
does not need more operations than $|\ms(M)|$. The next few lines
(Listing: \ref{listing:bisimalg:marksplitcopyphase:Splitting}, Lines:
\ref{bisimalg:marksplitcopyphase:splitterWorker:theLPFinding} - \ref{bisimalg:marksplitcopyphase:splitterWorker:emptyM-removal})
can be implemented in the constant time $O(1)$. The only exception
is the choice of the largest block out of \emph{subBlockMap.values}
and $M$ (Listing: \ref{listing:bisimalg:marksplitcopyphase:Splitting},
line: \ref{bisimalg:marksplitcopyphase:splitterWorker:choosingTheLargestPart}),
which can be computed in time proportional to $|$\emph{subBlockMap.values}$|$$\leq|\ms(M)|\leq|M|$.
The last \emph{for} loop (Listing: \ref{listing:bisimalg:marksplitcopyphase:Splitting},
line: \ref{bisimalg:marksplitcopyphase:splitterWorker:allButTheLargestOneCond2}
- \ref{bisimalg:marksplitcopyphase:splitterWorker:aux-str-update})
also depends on the size of \emph{subBlockMap.values}. Thus, the operation
of adding blocks to $\mathcal{S}$ and $\mathcal{P}$ is executed
at most \emph{$|$subBlockMap.values}$|$ times. The only exception
is \emph{nextStateToBlockId} update, which entails visiting every
state $b\in B$. Since $|B_{1}\cup\ldots\cup B_{|\sbmv|}|$ $\leq|M|$,
then the loop visits at most $|M|$ states, and hence, its running
time is at most $O(|M|)$. Thus, finally, the overall running time
of the \emph{Splitting()} procedure is determined as:
\begin{equation}
O(\max\{|\underset{v\in M}{\sum}|\oute(v)|,|M|\})\label{eq:seq-splitting-running-time-estim}
\end{equation}

The \emph{Marking()} and \emph{Splitting()} procedures are called
within the loops iterating through the elements of $\mathcal{S}$
and $\mathcal{M}$ correspondingly (Listing: \ref{listing:bisimalg:marksplitcopyphase}).
For the purpose of the current consideration, all the three parallel
\emph{for} loops (including \emph{Copying()}) are treated as sequential
loops (Listing: \ref{listing:bisimalg:marksplitcopyphase}, Lines:
\ref{code:MarkSplitCopyPhase:mainLoop:confor-call-1}, \ref{code:MarkSplitCopyPhase:mainLoop:confor-call-2},
\ref{code:MarkSplitCopyPhase:mainLoop:confor-call-3}). 

The initialization phase prepares the initial partition refinement
$\mathcal{P}$ and sets all the blocks from $\mathcal{P}$ as splitters
i.e. $\mathcal{S}=\mathcal{P}$. Thus, during the first pass of the
loop (Listing: \ref{listing:bisimalg:marksplitcopyphase}, Lines:
\ref{code:MarkSplitCopyPhase:mainLoop:Begin} - \ref{code:MarkSplitCopyPhase:mainLoop:MarkingCall})
all the blocks $S_{i}\in\mathcal{P}$ are processed. Since every single
call of the \emph{Marking()} procedure with $S_{i}\in\mathcal{S}$
on its input needs $O(|\ine(S_{i})|)$ time, and the cardinality of
$\underset{S_{i}\in\mathcal{S}}{\bigcup}\ine(S_{i})$ is smaller than
$\left|\mathsf{T}\right|$, then the sequential running time of the
first \emph{for} loop (Listing: \ref{listing:bisimalg:marksplitcopyphase},
Lines: \ref{code:MarkSplitCopyPhase:mainLoop:Begin} - \ref{code:MarkSplitCopyPhase:mainLoop:MarkingCall})
is upper-bounded by: $ $
\begin{equation}
O(\underset{S_{i}\in\mathcal{S}}{\sum}\underset{v\in S_{i}}{\sum}|\ine(v)|)=O(|\mathsf{T}|)\label{eq:seq-total-marker-worker-est}
\end{equation}

The \emph{Marking()} calls are responsible for filling $\mathcal{M}$
- the set of marked blocks for splitting. Since each splitter can
mean that any block $M\in\mathcal{P}$ gets marked, the total number
of states in all the marked blocks is upper-bounded by $\underset{M\in\mathcal{M}}{\sum}\underset{u\in M}{\sum}|u|\leq|\mathsf{S}|\leq|\mathsf{T}|$.
Of course, \emph{Splitting()} is called for every marked block $M\in\mathcal{M}$.
Taking into account that the running time of every single \emph{Splitting()}
routine is $O(\max\{|\oute(M)|,|M|\})$, the sequential running time
of the first execution of the second \emph{for} loop (Listing: \ref{listing:bisimalg:marksplitcopyphase},
Lines: \ref{code:MarkSplitCopyPhase:mainLoop:confor-call-2} - \ref{code:MarkSplitCopyPhase:mainLoop:confor-splittingCall})
is: 
\begin{equation}
O(\max\{\underset{M\in\mathcal{M}}{\sum}\underset{u\in M}{\sum}|\oute(u)|,\underset{M\in\mathcal{M}}{\sum}|M|)\leq O(\max\{|\mathsf{T}|,|\mathsf{S}|\})\label{eq:lab}
\end{equation}

In the second and subsequent passes of the \emph{MarkSplitCopy} loop
(Listing: \ref{listing:bisimalg:marksplitcopyphase}, Lines: \ref{code:MarkSplitCopyPhase:mainLoop:Begin}
- \ref{code:MarkSplitCopyPhase:mainLoop:End}) the number of splitters
in $\mathcal{S}$ is less than the number of blocks in $\mathcal{P}$.
Since, during the second iteration, all the blocks from $\mathcal{P}$
are used as splitters at least once, during the splitting step the
rule \emph{process all but the largest one} can be applied. In order
to illustrate the impact of this rule on the running time of the algorithm,
let us consider $S^{2}$ added to $\mathcal{S}$ as the result of
the first pass of the \emph{MarkSplitCopy} loop (Figure: \ref{fig:MarkSplitCopyLoop:execution:tree}).
Let us suppose that $M_{1}^{2},\ldots,M_{i}^{2}$, (where $|\mathcal{P}|\geq i\geq2$)
are the blocks marked as a result of processing $S^{2}$. 

\begin{figure}
\begin{centering}
\includegraphics[scale=1.1]{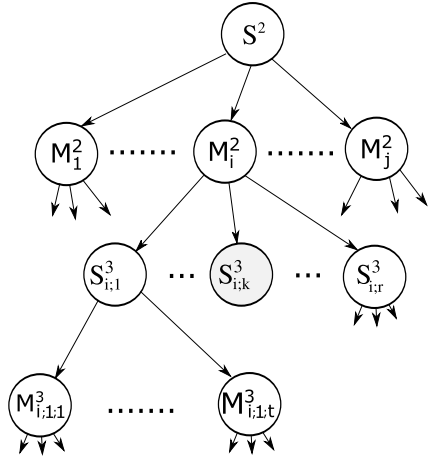}
\par\end{centering}

\caption{MarkSplitCopy loop execution tree}
\label{fig:MarkSplitCopyLoop:execution:tree}
\end{figure}

Then, every marked block $M_{j}^{2}$ is divided into $|M_{j}^{2}|\geq r\geq2$
sub-blocks: $S_{i;1}^{3},\ldots,S_{i;r}^{3}$. Let us assume that
$S_{i;k}^{3}$ is the largest of them. Then it holds that $|S_{i;k}^{3}|\geq\frac{|M_{j}^{2}|}{r}$,
and each block $S_{i;1}^{3},\ldots,S_{i;k-1}^{3},S_{i;k+1}^{3},\ldots,S_{i;r}^{3}$
is smaller than $\frac{|M_{j}^{2}|}{2}$. The block $S_{i;k}^{3}$
as the largest one is not further processed. Hence, even in the worst
case, the blocks intended for further processing are smaller (or equal)
than half of the block from which they were separated%
\footnote{In fact, the division into several blocks means that very often the
size of each of them is substantially less than half of the divided
block.%
}. This leads to the conclusion that the single $v\in\mathsf{S}$ can
be assigned during the course of the algorithm to at most $\log|\mathsf{S}|$
different blocks. In other words, every $v\in\mathsf{S}$ is at most
$\log|\mathsf{S}|$ times processed by \emph{Splitting()}. Therefore,
the total time of the \emph{Splitting()} procedure in the second and
subsequent iterations of the \emph{MarkSplitCopy} loop is 
\begin{equation}
\underset{v\in\mathsf{S}}{\sum}\max\{|\oute(v)|,1\}\cdot\log|\mathsf{S}|=\max\{|\mathsf{T}|,|\mathsf{S}|\}\cdot\log|\mathsf{S}|\label{eq:lab2}
\end{equation}

Therefore, the total time of the presented algorithm with respect
to the \emph{Splitting()} procedure for the first, second and subsequent
iterations of the \emph{MarkSplitCopy} loop is: 

\begin{equation}
O(\max\{|\mathsf{T}|,|\mathsf{S}|\}\cdot(1+\log|\mathsf{S}|))\label{eq:lab3}
\end{equation}

In the \emph{Splitting()} procedure, the linked hash map structure
is used not only as the data container, but also as part of a key
structure in the other hash map. This happens when \emph{updateValueSet(m,v)}
on \emph{subBlocksMap} is called (Listing: \ref{listing:bisimalg:marksplitcopyphase:Splitting},
Line: \ref{bisimalg:marksplitcopyphase:splitterWorker:groupingByMark-end}).
Thus, in fact, the operation \emph{updateValueSet($m,v$)} does not
run in the time $O(1)$ as would happen if the key comparison was
a simple operation, but depends on the size of the key in the form
of the $k+1$ tuple $m=(l,\{i_{1},\ldots,i_{k}\})$. Assuming that
the set of integers $\{i_{1},\ldots,i_{k}\}$ is implemented as a
\emph{linked hash map}, the comparison of two keys is expected to
take $O(k)$. Thus, for every $v\in\mathsf{S}$ the single call \emph{updateValueSet($m,v$)}
takes at most $O(k)$, where $k=|\oute(v)|$. Since $v\in\ms(M)$,
and during the course of the whole algorithm every $v\in\mathsf{S}$
is at most in $\log|\mathsf{S}|$ blocks, the total time of \emph{updateValueSet($m,v$)}
calls is: 
\begin{equation}
O(\underset{v\in\mathsf{S}}{\sum}|\oute(v)|\cdot\log|\mathsf{S}|)=O(|\mathsf{T}|\cdot\log|\mathsf{S}|)\label{eq:lab4}
\end{equation}
In other words, despite the fact that the key is a collection itself,
its use does not affect the overall asymptotic running time of the
algorithm. 

Splitters, i.e. the blocks processed by the \emph{Marking()} procedure
are added to $\mathcal{S}$ by the \emph{Splitting()} procedure as
a result of partitioning blocks from $\mathcal{M}$. Thus, the single
$v$ is a member of the splitter $S$ as frequently as it is processed
by the \emph{Splitting()} procedure. Therefore, $v$ as an element
of some splitter $S\in\mathcal{S}$ is under consideration of the
\emph{Marking()} procedure at most $\log|\mathsf{S}|$ times. For
this reason, the total time of \emph{Marking()} in the second and
subsequent iterations is: 
\begin{equation}
\underset{v\in\mathsf{S}}{\sum}|\ine(v)|\cdot\log|\mathsf{S}|=|\mathsf{T}|\cdot\log|\mathsf{S}|\label{eq:lab5}
\end{equation}
Thus, the overall running time of the presented algorithm with respect
to \emph{Marking()} is 
\begin{equation}
O(|\mathsf{T}|\cdot(1+\log|\mathsf{S}|))\label{eq:lab6}
\end{equation}

The least complicated is the \emph{Copying()} procedure. It is responsible
for adding pairs \emph{(state, id)} into \emph{nextStateToBlockId}
when the given state has changed its block assignment. Since every
state can be in at most $\log|\mathsf{S}|$ marked blocks, the \emph{Copying()}
procedure needs to copy at most $|\mathsf{S}|\cdot\log|\mathsf{S}|$
pairs during the course of the algorithm. Thus, the overall running
time of \emph{Copying()} is $O(|\mathsf{S}|\cdot\log|\mathsf{S}|)$.
Moreover, assuming that it is faster on the given platform to replace
\emph{stateToBlockId} by \emph{nextStateToBlockId}, \emph{Copying()}
can be reduced to a simple assignment with the constant running time~$O(1)$. 

In conclusion, the expected overall sequential running time of the
whole algorithm is: 
\begin{equation}
O(|\mathsf{T}|\cdot\log|\mathsf{S}|)\label{eq:lab7}
\end{equation}
which is the sum of initialization $O(|\mathsf{T}|)$, overall expected
running time of \emph{Splitting()} and \emph{Marking() }is $O(|\mathsf{T}|\cdot(1+\log|\mathsf{S}|))$,
whilst copying data from \emph{nextStateToBlockId} to \emph{stateToBlockId}
takes at most $O(|\mathsf{S}|\cdot\log|\mathsf{S}|)$. 

The amount of memory used by the sequential algorithm depends on the
amount of memory used by its data structures. Thus, assuming that
that memory occupied by a hash map depends linearly on the number
of elements stored in it \citep{Cormen2009ita}, the global data structures:
\emph{initPartition, blockById, stateToBlockId, nextStateToBlockId,
$\mathcal{M}$ }and\emph{ $\mathcal{S}$ }need at most \emph{$O(|\mathsf{S}|)$}
of memory. The total size of the keys used by \emph{splitsMap} (Listing:
\ref{listing:bisimalg:marksplitcopyphase:marking}, Line: \ref{bisimalg:marksplitcopyphase:markerWorker:foreachS:partMarker})
is at most $O(|\mathsf{T}|)$, and similarly the total size of the
keys used by \emph{subBlocksMap} (Listing: \ref{listing:bisimalg:marksplitcopyphase:Splitting},
Line: \ref{bisimalg:marksplitcopyphase:splitterWorker:groupingByMark-end})
is $O(|\mathsf{T}|)$. The size of \emph{splitsMap, markersMap }and\emph{
subBlocksMap }is at most $O(|\mathsf{S}|)$. Thus, the overall amount
of memory used by the algorithm in the given moment of time is $O(|\mathsf{S}|+|\mathsf{T}|)$.

\section{Concurrent algorithm\label{sub:Concurrent-algorithm}}

The presented algorithm has been designed to use concurrent objects
(concurrent data structures) \citep{Herlihy2008taom,Moir2007cds}.
Concurrent objects try to preserve the running time known from the
sequential data structures while maintaining a high level of concurrency.
In situations where the number of processors is limited, and the overall
sequential time of the method execution is much larger than the synchronization
time, such objects seem to be a good practical approximation of fully
parallel data structures in the abstract \emph{PRAM} (\emph{Parallel}
\emph{Random} \emph{Access} \emph{Machines}) model. The practical
implementation guidelines, together with the preliminary results of
the experimental implementation, can be found in (Sec. \ref{sub:Concurrent-implementation}). 

Considering the concurrent algorithm, the question arises as to what
extent the algorithm could be parallelized? In other words, to what
asymptotic running time may the proposed algorithm tend to? In order
to answer this question (Sec. \ref{sub:Concurrent-complexity}), let
us assume that all the data collections used by the algorithm are
fully parallel accessible arrays. As the parallel computation model,
the shared memory model (\emph{PRAM}) is adopted. Of course, such
an assumption means the fully parallel implementation would require
a very large amount of memory. The number of processors that would
be able to simultaneously handle these arrays also must be large.
Therefore, the main objective of discussing parallelization capabilities
(Sec. \ref{sub:Concurrent-complexity}) is to determine the lower
bound of the parallel running time of the presented construction.
Despite this, in some specific cases the fully parallel implementation
of the presented algorithm might be of interest for practitioners.

\subsection{Capabilities of parallelization \label{sub:Concurrent-complexity}}

For the purpose of studying to what extent the presented algorithm
could be parallelized, all the data structures need to be implemented
as arrays of references. Therefore, it is assumed that maps such as
\emph{initPartition}, \emph{blockById}, \emph{stateToBlockId}, and
\emph{nextStateToBlockId}, and queues $\mathcal{M}$ and $\mathcal{S}$
are implemented as directly addressed tables \citep[p. 254]{Cormen2009ita}.
The methods \emph{Marking()} and \emph{Splitting()} use their own
local maps \emph{splitsMap}, \emph{subBlocksMap} and \emph{markersMap}.
All of them also need to be represented as tables. All of these tables,
except \emph{initPartition} and \emph{subBlocksMap,} are naturally
indexable by numbers resulting from the algorithm. In the case of
\emph{initPartition} and \emph{subBlocksMap,} indexes have the form
of n-tuples, thus the actual indexes need to be calculated in parallel.
The algorithm allowing for conversion of an unsorted n-tuple of numbers
to the single numeric index can be found in (App. \ref{sec:parallel-index-creation-lema}).
In addition to the data structures, the \emph{LTS} graph also needs
to be in the form of an array. Hence, it is assumed that every state
$s\in\mathsf{S}$ provides the arrays \emph{s.out} and \emph{s.in}
of outgoing and incoming edges. Every edge provides reference to its
beginning state, ending state, and the label. Every partition block
$B\in\mathcal{P}$ also provides two arrays \emph{B.vert}$ $ and
\emph{B.mvert} that store the states belonging to $B$, and the states
that are marked by the \emph{Marking()} procedure correspondingly.
Moreover, it is assumed that for each label $l\in\mathsf{A}$ and
state $s\in\mathsf{S}$ an appropriate integer number (its unique
index) could be computed in the constant time $O(1)$. The provided
estimation uses the designations $\alpha$ as the maximal length of
the state signature,i.e. $\alpha=\underset{s\in\mathsf{S}}{\max}|\sig(s)|$,
and $\beta$ as the state maximal output degree i.e. $\beta=\underset{s\in\mathsf{S}}{\max}$$|$\emph{s.out}$|$.

\subsubsection{Initialization phase\label{sub:Complexity:Initialization-phase}}

Assuming that the algorithm has at its disposal at least $|\mathsf{S}|$
processors, then the first two subroutine calls can be executed fully
independently (Listing: \ref{listing:bisimalg:initphase}, Lines:
\ref{code:initializationPhase:StateSignatureInit-start} - \ref{code:initializationPhase:PartInitEnd}).
The $|\mathsf{A}|^{\beta}$ number of processors allows every cell
of the \emph{initPartition} array to be assigned to separate processors.
Therefore, $O(1)$ running time of \emph{AuxStructInit()} is paid
by a demand for $|\mathsf{A}|^{\beta}$ processors (Listing: \ref{listing:bisimalg:initialization-threephases},
Lines: \ref{code:initialization:AuxStructInit-begin} - \ref{code:initialization:AuxStructInit-end}),
although the actual work is $O(|\mathsf{S}|)$, since at most $|\mathsf{S}|$
cells in \emph{initPartition} are not empty. 

The first \emph{StateSignatureInit()} needs to prepare signatures
for the given $v\in\mathsf{S}$ (Listing: \ref{listing:bisimalg:initialization-threephases},
Line: \ref{code:initialization:VertexSigInit-sig-creation}). In fact,
the signature $\sig(v)$ needs to be a single integer uniquely calculated
on the basis of the \emph{ids} of \emph{labels}. Since the labels
can be easily identified by integers, the desired index can be calculated
in $O(\log\beta)$ using the $\beta$ processor (according to App.
\ref{sec:parallel-index-creation-lema}). The \emph{StateSignatureInit()}
is also responsible for creating new empty blocks (Listing: \ref{listing:bisimalg:initialization-threephases},
Line: \ref{code:initialization:VertexSigInit-end}). These blocks
should have uniquely assigned \emph{ids}. In the sequential case,
this was not a problem due to the use of a simple shared counter.
In the parallel case, however, to avoid synchronization during the
block instantiation, it assumed that the ids of blocks are determined
on the basis of the processor's id. Unfortunately, this solution increases
the range of block indices from $|\mathsf{S}|\log|\mathsf{S}|$ to
$|\mathsf{S}|^{2}$, which results in an increase in resource demand.
Since the \emph{initPartition} is implemented as an array, then the
other parts of the \emph{StateSignatureInit()} procedure (Listing:
\ref{listing:bisimalg:initialization-threephases}, Lines: \ref{code:initialization:VertexSigInit-contains-sig}
- \ref{code:initialization:VertexSigInit-contains-sig}) can be implemented
in $O(1)$.  

The second initialization procedure \emph{PartitionInit()} contains
three simple array operations. These are getting values from the \emph{initPartition}
array (Listing: \ref{listing:bisimalg:initialization-threephases},
Line: \ref{code:initialization:PartitionInit-get-sig}), inserting
references to $v\in\mathsf{S}$ into \emph{block.vert} (Listing: \ref{listing:bisimalg:initialization-threephases},
Line: \ref{code:initialization:PartitionInit-block-update}) and inserting
\emph{block.id} into \emph{stateToBlockId}. Thus, the whole procedure
\emph{PartitionInit()} takes $O(1)$ running time in parallel. Similarly,
\emph{AuxStructInit()} contains three simple table operations (Listing:
\ref{listing:bisimalg:initialization-threephases}, Lines: \ref{code:initialization:BlockById}
- \ref{code:initialization:AuxStructInit-end}). Thus, its overall
running time is $O(1)$. 

The method \emph{StateSignatureInit()} is called by \emph{InitializationPhase()}
in parallel for every state $s\in\mathsf{S}$ (Listing: \ref{listing:bisimalg:initphase},
Lines: \ref{code:initializationPhase:StateSignatureInit-start} -
\ref{code:initializationPhase:StateSignatureInit-end}), which requires
$|\mathsf{S}|$ processors. Since every single call of \emph{StateSignatureInit()}
needs $\beta$ processors and takes $O(\log\beta)$ of time, the running
time of \emph{InitializationPhase() }with respect to its first sub-procedure
is $O(\log\beta)$ and requires $\beta\cdot|\mathsf{S}|$ processors.
The second \emph{PartitionInit()} method does not contain any additional
complex operations inside, therefore it does not contribute to an
increase of the overall running time estimation of the \emph{InitializationPhase()}
procedure. Finally, \emph{AuxStructInit()} is executed in parallel
as many times as entries in the \emph{initPartition} array. Thus,
to obtain $O(1)$ running time, it needs $|\mathsf{A}|^{\alpha}$
parallel processors. 

In summary, the \emph{InitializationPhase()} could achieve $O(\log\beta)$
running time, using $max\{\beta\cdot|\mathsf{S}|,|\mathsf{A}|^{\alpha}\}$
processors. The actual work performed by processors is at most $O(\beta\cdot|\mathsf{S}|)$.
Taking into account that the running time of the main part of the
algorithm achieves $O(|\mathsf{S}|\log\beta)$ running time, the above
estimates could be relaxed in practice without affecting the final
estimates.

\subsubsection{MarkSplitCopy phase}

The running time of the \emph{MarkSplitCopyPhase()} method (Listing:
\ref{listing:bisimalg:marksplitcopyphase}) depends on the number
of the loop during execution (Listing: \ref{listing:bisimalg:marksplitcopyphase},
Lines: \ref{code:MarkSplitCopyPhase:mainLoop:Begin} - \ref{code:MarkSplitCopyPhase:mainLoop:End}),
and the parallel running time of its subroutine calls. The maximal
number of block splits during the course of the algorithm is $|\mathsf{S}|$.
Thus, it might happen that only one block split in every turn of the
loop is performed. Hence, in the worst case scenario the loop executes
$|\mathsf{S}|$ times. As shown below, every iteration of the loop
takes at most $O(\log\beta)$ time, and the overall parallel running
time of the \emph{MarkSplitCopyPhase()} method,\emph{ }and thus the
algorithm, is $|\mathsf{S}|O(\log\beta)$.

\paragraph{Marking}

The first considered subroutine is \emph{Marking().} It is called
in parallel for every block $S\in\mathcal{S}$. Because $|\mathcal{S}|\leq|\mathsf{S}|$,
at most $|\mathsf{S}|$ processors are needed to call \emph{Marking()}
in parallel (Listing: \ref{listing:bisimalg:marksplitcopyphase},
Lines: \ref{code:MarkSplitCopyPhase:mainLoop:confor-call-1} - \ref{code:MarkSplitCopyPhase:mainLoop:MarkingCall}).
Two parallel \emph{for} loops: the iteration through $s\in S$ (Listing:
\ref{listing:bisimalg:marksplitcopyphase:marking}, Line: \ref{bisimalg:marksplitcopyphase:markerWorker:foreachv-in-S})
and the iteration through $(u,s)\in$\emph{~s.in} need at most $\max\{|\mathsf{S}|,|\mathsf{T}|\}$
processors to be executed in parallel. Two operations: key creation
(Listing: \ref{listing:bisimalg:marksplitcopyphase:marking}, Line:
\ref{bisimalg:marksplitcopyphase:markerWorker:foreachS:partMarker})
and inserting the state's reference into the array (Listing: \ref{listing:bisimalg:marksplitcopyphase:marking},
Line: \ref{bisimalg:marksplitcopyphase:markerWorker:foreachS:stop})
need $O(1)$ running time%
\footnote{It is assumed that \emph{splitsMap} holds references to the arrays
of size $|\mathsf{S}|$%
}. 

To visit all the arrays stored in \emph{splitsMap} in parallel $|\mathsf{A}||\mathsf{S}|^{2}$,
parallel processors are required (Listing: \ref{listing:bisimalg:marksplitcopyphase:marking},
Line: \ref{bisimalg:marksplitcopyphase:markerWorker:foreach-splits-keys-begin}).
The next line (Listing: \ref{listing:bisimalg:marksplitcopyphase:marking},
Line: \ref{bisimalg:marksplitcopyphase:markerWorker:fetching_P})
fetches the block reference from the \emph{blockById} array and takes
$O(1)$ of time. Also, it decides whether $|B|>1$ can be performed
in asymptotically constant time%
\footnote{This is made possible by the following procedure. At the very beginning,
every processor $p_{i}\in\{p_{1},\ldots,p_{|\mathsf{S}|}\}$ checks
whether \emph{B.vert}$[i]$ contains a reference to a state object,
and if so writes its index in the shared variable $x$. Next, the
winning reference is copied to auxiliary \emph{tmp} $\leftarrow$\emph{
B.vert}$[x]$, and the winning cell is cleared \emph{B.vert}$[x]\leftarrow$
\emph{nil}. Then, once again, every processor $p_{i}\in\{p_{1},\ldots,p_{|\mathsf{S}|}\}$
is asked to write its index in the shared variable $y$ in case \emph{B.vert}$[i]\neq nil$.
Thus, after the restoration of \emph{B.vert}$[x]\leftarrow$ \emph{tmp,}
it holds that $|B|>1$ if and only if $x$ and $y$ are different
from $0$. %
} $O(1)$. 

Although, in the sequential case, the running time needed for deciding
whether $|B|>$ $|$\emph{splitsMap.get(pm)}$|$ was negligible (the
elements are inserted into \emph{splitsMap.get(pm)} and $B$ sequentially,
thus they can also be sequentially counted during the insertion),
in the parallel case, it needs to be taken into account. Since all
the elements are inserted into $B$ and \emph{splitsMap.get(pm)} in
parallel, there is no place where the shared counter could help. Therefore,
the condition $|B|>$ $|$\emph{splitsMap.get(pm)}$|$ needs to be
reformulated into an equivalent one, that could be processed in parallel.
Let us denote $B_{\new}\overset{df}{=}$\emph{splitsMap.get(pm)} and
\emph{lab} $\overset{df}{=}$ \emph{first(pm)}. The reason for $|B|>|B_{\new}|$
evaluation in (Listing: \ref{listing:bisimalg:marksplitcopyphase:marking}
Line: \ref{bisimalg:marksplitcopyphase:markerWorker:split_need_ident})
is to decide whether $\overline{B}_{\new}(\lab,B)=B$. Thus, the problem
is to decide whether $|B|>|B_{new}|$ can be reduced to a parallel
preparation of the $|\mathsf{S}|$-element array \emph{snew} containing
the state references of only those positions that correspond to the
ids of elements in $\overline{B}_{\new}(\lab,B)$. Then, the parallel
comparison of both \emph{snew} and $B$ needs to be performed.  Preparing
the \emph{snew} array involves $|\oute(B_{\new})|$ processors  ($|\mathsf{T}|$
at the worst case scenario). On the other hand, the parallel \emph{snew}
and \emph{B.vert}$ $ comparison requires $|\mathsf{S}|$ parallel
processors. Therefore, all the parallel\emph{ }evaluations of the
$|B|>|B_{\new}|$ statement involve at most $\max\{|\mathsf{T}|,|\mathsf{S}|^{2}\}$
concurrent processors.  The loop (Listing: \ref{listing:bisimalg:marksplitcopyphase:marking},
Lines: \ref{bisimalg:marksplitcopyphase:markerWorker:split_need_ident}
- \ref{bisimalg:marksplitcopyphase:markerWorker:states-mark}) iterating
through $v_{i}\in B_{\new}$\emph{.vert} can be executed in $O(1)$
with the help of as many parallel processors as the total size of
\emph{splitsMap}. Similarly, (Listing: \ref{listing:bisimalg:marksplitcopyphase:marking},
Line: \ref{bisimalg:marksplitcopyphase:markerWorker:foreach-splits-keys-end})
assignment also needs $O(1)$ of execution time. Summing up, the \emph{Marking()
}procedure achieves the parallel running time $O(1)$, and it requires
at most $\max\{|\mathsf{A}|\cdot|\mathsf{S}|^{2},|\mathsf{T}|\}$
of concurrent processors.

\paragraph{Splitting}

The first parallel iteration in \emph{Splitting()} goes through the
array \emph{M.mvert} (Listing: \ref{listing:bisimalg:marksplitcopyphase:Splitting}
Lines: \ref{bisimalg:marksplitcopyphase:splitterWorker:markerComp-start}
- \ref{bisimalg:marksplitcopyphase:splitterWorker:markerComp-end}).
It requires at most $|\mathsf{S}|$ processors to execute \emph{markersMap}
update in parallel. The values stored in \emph{markerMap} are also
the keys in \emph{subBlocksMap}. Therefore, they need to be uniquely
converted into single integers. Since there are, in fact, $k+1$ unsorted
tuples, their conversion to integers takes $O(\log(k+1))$ time of
$k+1$ processors (App. \ref{sec:parallel-index-creation-lema}).
Thus, because of $k+1\leq\beta$, the overall parallel running time
of the loop (Listing: \ref{listing:bisimalg:marksplitcopyphase:Splitting},
Lines: \ref{bisimalg:marksplitcopyphase:splitterWorker:markerComp-start}
- \ref{bisimalg:marksplitcopyphase:splitterWorker:markerComp-end})
is $O(\log\beta)$. The size of a single \emph{markersMap} array is
$O(|\mathsf{S}|$). The overall memory needed for a different \emph{markersMap}
is $O(|\mathsf{S}|^{2})$, whilst direct addressing of a \emph{subBlockMap}
requires $O(|\mathsf{A}|\cdot|\mathsf{S}|^{2\alpha})$. 

The second parallel loop (Listing: \ref{listing:bisimalg:marksplitcopyphase:Splitting},
Lines: \ref{bisimalg:marksplitcopyphase:splitterWorker:groupingByMark-start}
- \ref{bisimalg:marksplitcopyphase:splitterWorker:stateRemoval})
can be executed in $O(1)$ since both operations are reduced to accessing
the \emph{markersMap} and \emph{subBlocksMap} arrays. Similarly, removing
marked vertices (Listing: \ref{listing:bisimalg:marksplitcopyphase:Splitting},
Line: \ref{bisimalg:marksplitcopyphase:splitterWorker:stateRemoval})
also needs the time $O(1)$. Choosing the maximal block out of \emph{subBlockMap.values}
and $M$ requires $O(\log|\mathsf{S}|)$ operations (Listing: \ref{listing:bisimalg:marksplitcopyphase:Splitting},
Line: \ref{bisimalg:marksplitcopyphase:splitterWorker:choosingTheLargestPart})
and can be implemented in two parallel steps. In the first step, for
every block stored as a value in \emph{subBlockMap} and $M$, the
number of elements is computed, then, in the second step all the blocks
from \emph{subBlockMap} and $M$ are sorted, and the maximal block
is determined. Each of these two steps needs $O(\log|\mathsf{S}|)$
running time. The next few operations (Listing: \ref{listing:bisimalg:marksplitcopyphase:Splitting},
Lines: \ref{bisimalg:marksplitcopyphase:splitterWorker:else} - \ref{bisimalg:marksplitcopyphase:splitterWorker:emptyM-removal})
are either block comparisons or array operations, hence all of them
can be implemented in $O(1)$ using at most $|\mathsf{S}|$ processors.
The last parallel loop can be executed in $O(1)$ assuming that every
cell of \emph{subBlocksMap} has a separate processor assigned. Inside
this loop, the only instruction that needs to be executed in parallel
is update of the \emph{nextStateToBlockId} array. Since every $v\in B$
needs to be updated, the update requires at most $O(|\mathsf{S}|)$
processors. Summing up, the \emph{Splitting() }procedure achieves
the parallel running time $O(\log\beta)$, and it requires at most
$|\mathsf{A}|\cdot|\mathsf{S}|^{2\alpha}$ concurrent processors.

\subsubsection{Memory management\label{sub:The-demand-for-resources}}

High demand on shared memory and the number of processors comes from
the need to allocate large arrays. Each cell in such an array is visited
by a single processor. Usually, most of the cells are empty, therefore
the actual work performed by processors is much smaller than the number
of processors. Therefore, during the \emph{InitializationPhase(),}
and also during \emph{MarkSplitCopyPhase()} the number of processors
that do something more than finding that their cell is empty is at
most $|\mathsf{T}|$. The arrays may hold object references, but not
the object itself. Therefore, the single cell usually has a machine
word size. 

Many parts of the algorithm have $O(1)$ parallel running time. In
such places, knowing that the total time of the overall algorithm
is $O(|\mathsf{S}|\log\beta)$, demand on the hardware resources could
be optimized. For example, to handle the iteration through \emph{splitsMap}
in parallel, $\left\lceil \nicefrac{|\mathsf{A}|\cdot|\mathsf{S}|^{2\alpha}}{\log\beta}\right\rceil $
processors would be enough etc. 

In practice, the perfect parallel implementation may be a good option
if $\beta$ and $|\mathsf{A}|$ are small. For such a case, it may
be worthwhile implementing such a parallel model and use arrays as
structures, which guarantee the fully parallel access to the stored
data.

\section{Notes on the solution optimality\label{sec:Notes-on-the-solution-optimality}}

Finding a fast and efficient parallel algorithm solving \emph{RCPP}
is still a challenge for researchers. It has been shown that deciding
strong bisimilarity is \emph{P-Complete} \citep{Balcazar1992dbip},
which indicates that solutions should be sought in the class of problems
decidable in sequential time $|\mathsf{S}|^{O(1)}$ \citep{Greenlaw1995ltpc}.
Therefore, the reasonable running time for this class of problems
is $O(|\mathsf{S}|^{\epsilon})$ where $\epsilon$ is a small real
constant. In particular, the work \citep{Rajasekaran1998pafr} contains
conjecture that $\epsilon$ might be even smaller than $1$. In fact,
it is impossible and $\epsilon$ must be equal or greater than $1$.
Since, according to the best knowledge of the author, no one has so
far clearly stated this in the literature, a simple reasoning confirming
that $\epsilon\geq1$ is presented below. 
\begin{theorem}
The running time lower bound for any algorithm solving the bisimulation
problem is $O(|\mathsf{S}|)$\end{theorem}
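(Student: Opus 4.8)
The plan is to establish this as a genuine \emph{lower} bound by an adversary (indistinguishability) argument, showing that no algorithm can avoid inspecting essentially every element of $\mathsf{S}$, so that the number of elementary steps it must perform is $\Omega(|\mathsf{S}|)$. First I would exhibit a hard family of instances on which the output partition is forced to reveal one independent bit per state. Fix two sink states $t_a,t_b$ with $\sig(t_a)\neq\sig(t_b)$ (so that, by Def. \ref{bisim-def}, they never share a block), and let each of the remaining $n=|\mathsf{S}|-2$ states $s_i$ carry a single outgoing edge chosen by a bit $x_i$: either $s_i\overset{a}{\rightarrow}t_a$ or $s_i\overset{b}{\rightarrow}t_b$. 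Then $s_i\sim s_j$ precisely when $x_i=x_j$, so the block containing $s_i$ is determined solely by $x_i$, and the computed partition encodes the entire vector $(x_1,\ldots,x_n)$.

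Next I would run the adversary argument against any candidate algorithm that decides the partition in $f(|\mathsf{S}|)$ elementary steps. Each step reads or writes only a bounded number of memory cells, so the whole run touches $O(f(|\mathsf{S}|))$ cells of the input/output. If $f(|\mathsf{S}|)=o(|\mathsf{S}|)$, then on the all-$a$ instance there is an index $i$ whose outgoing edge is never read and whose block label is never written distinctly. Flipping $x_i$ from $a$ to $b$ yields a strictly different correct partition, since $s_i$ must move from the $t_a$-block to the $t_b$-block; yet the algorithm's computation, and hence its output, is unchanged on every cell it touches. It is therefore incorrect on at least one of the two instances, a contradiction. Consequently $f(|\mathsf{S}|)=\Omega(|\mathsf{S}|)$, which is the asserted bound (written as $O(|\mathsf{S}|)$ in the statement).

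The step I expect to be the main obstacle is pinning down the machine model so that the counted quantity is truly a \emph{running time} and not merely total work. In an unbounded-processor \emph{PRAM} the input can be read and the output written in $O(1)$ parallel time, so the bound cannot hold as a parallel wall-clock time without a restriction; the clean statement is for the sequential \emph{RAM} model (consistent with the ``decidable in sequential time $|\mathsf{S}|^{O(1)}$'' framing preceding the theorem), or equivalently as a lower bound on total work performed across all processors. I would therefore phrase the result in that model and make the charging explicit: the adversary flip argument counts the cells any single processor must touch, and summing over processors gives $\Omega(|\mathsf{S}|)$ work regardless of parallelism.

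Finally, as a simpler self-contained witness that immediately yields $\epsilon\geq 1$ and rules out the conjectured $\epsilon<1$, I would note the output-size bound: on an instance where all states carry pairwise distinct signatures, the coarsest partition consists of $|\mathsf{S}|$ singleton blocks, so merely emitting a correct answer requires naming a distinct block for each of $|\mathsf{S}|$ states and thus $\Omega(|\mathsf{S}|)$ elementary operations. This observation sharpens the adversary construction above by covering the case in which the algorithm tries to economize on input reads, and together the two arguments show that the reasonable running time $O(|\mathsf{S}|^{\epsilon})$ is attainable only for $\epsilon\geq 1$.
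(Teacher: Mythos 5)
Your adversary argument is sound as far as it goes, but it proves a different --- and strictly weaker --- statement than the one the paper is after, and the difference is exactly the point of this theorem. The theorem sits inside a discussion of parallel algorithms: it exists to refute the conjecture, attributed to Rajasekaran and Lee, that the exponent $\epsilon$ in a running time $O(|\mathsf{S}|^{\epsilon})$ could be smaller than $1$, where the running times in question are \emph{parallel} ones. Accordingly, the paper's proof is built on a deliberately \emph{deep} construction: two disjoint chains $s_{1}\overset{a}{\rightarrow}s_{2}\overset{a}{\rightarrow}\cdots\overset{a}{\rightarrow}s_{n}$ and $s_{n+1}\overset{a}{\rightarrow}\cdots\overset{a}{\rightarrow}s_{2n}$ over a single label, where deciding $s_{1}\sim s_{n+1}$ requires first resolving $s_{2}\sim s_{n+2}$, which requires $s_{3}\sim s_{n+3}$, and so on --- a dependency chain of length $n$ that (the paper argues) no number of processors can collapse; this is the ``cannot be effectively parallelized'' step. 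Your family is the opposite: it has depth one, and precisely for that reason a CRCW PRAM with $|\mathsf{S}|$ processors solves it in $O(1)$ time (each processor reads the single outgoing edge of $s_{i}$ and writes its block). So your instances can never witness a parallel lower bound, and your argument yields only a bound on sequential time or total work --- which, as your own closing paragraph concedes, is essentially the trivial input-reading/output-writing bound.

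You noticed this limitation, but your response goes the wrong way: you assert that ``the bound cannot hold as a parallel wall-clock time'' and retreat to the sequential RAM model. That assertion does not follow from your argument --- the fact that the \emph{input} can be read in $O(1)$ parallel time does not imply the \emph{problem} can be solved in $O(1)$ parallel time --- and it contradicts precisely what the paper's proof claims on its chain instances (a claim made plausible by the P-completeness result cited in the same section). The missing idea, concretely, is the long chain: a family in which bisimilarity information must propagate through $\Theta(|\mathsf{S}|)$ sequentially dependent refinement steps, so that the bound applies to concurrent algorithms as well. To be fair, the paper's own non-parallelizability step is itself informal (it tacitly assumes every algorithm resolves the pair-questions $s_{i}\sim s_{n+i}$ in backward order; a fully rigorous version for arbitrary PRAM algorithms would amount to separating P from NC), but a faithful proof of this theorem has to engage that construction and that dependency argument, rather than prove the easy sequential statement and disclaim the parallel one.
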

\begin{proof}
Let \emph{LTS}$_{1}$$=(\mathsf{S},\mathsf{T},L)$ be a labelled transition
system, so that $\mathsf{S}=\{s_{1},\ldots,s_{2n}\}$, $\mathsf{T}=\{s_{i}\overset{a}{\rightarrow}s_{i+1}|i=1,\ldots,n-1,n+1,\ldots2n\}$$ $
and constant function $L=a$ (Fig. \ref{fig:lowerBoundExample}).
Thus, according to the definition of bisimulation (Def. \ref{bisim-def})
deciding whether $s_{1}\sim s_{n+1}$ requires answering the question
$s_{2}\sim s_{n+2}$ and so on, up to $s_{n}\sim s_{2n}$. Therefore,
an optimal (the fastest possible) decision algorithm answering the
question $s_{1}\sim s_{n+1}$ first needs to decide $s_{n}\sim s_{2n}$,
then in the next step $s_{n-1}\sim s_{2n-1}$, and finally in the
$n$'th step it is able to decide that $s_{1}\sim s_{n+1}$. None
of the steps may be skipped, since $s_{i}\sim s_{n+i}$ cannot be
resolved without knowing the result of $s_{i+1}\sim s_{n+i+1}$. Therefore,
there is no algorithm that solves the $s_{1}\sim s_{n+1}$ problem
in a smaller number of steps than $n$. The algorithm cannot be effectively
parallelized since there are no two different problems in the form
$s_{i}\sim s_{n+i}$ and $s_{j}\sim s_{n+j}$, where $1\leq i<j\leq n$,
that can be answered independently of each other. Therefore, the asymptotic
lower bound for the algorithm solving the problem of bisimilarity
between $s_{1}$ and $s_{n+1}$ in \emph{LTS}$_{1}$ is $O(\nicefrac{|\mathsf{S}|}{2})=O(|\mathsf{S}|)$.
In particular, this result indicates that the best algorithm solving
the bisimulation problem cannot run faster than $O(|\mathsf{S}|)$. 
\end{proof}
\begin{center}
\begin{figure}
\begin{centering}
\includegraphics[scale=1.3]{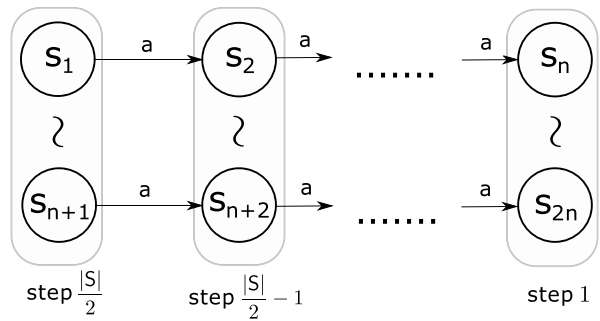}
\par\end{centering}

\caption{Problem $s_{1}\sim s_{n+1}$ is decidable in at least $\nicefrac{|\mathsf{S}|}{2}$
steps}
\label{fig:lowerBoundExample}
\end{figure}

\par\end{center}

The running time lower bound presented above is also valid for similar
problems. Since any algorithm solving \emph{RCPP }also solves bisimulation,
there is no algorithm that solves \emph{RCPP} asymptotically faster
than $O(|\mathsf{S}|)$. In particular, it is easy to observe that
for any $\epsilon<1$ there exists such (large enough) $\mathsf{S}$
for which $\nicefrac{|\mathsf{S}|}{2}>|\mathsf{S}|^{\epsilon}+O(1)$.
The graph induced by \emph{LTS}$_{1}$ is acyclic. This means that
the presented estimation is valid also when an input problem corresponds
to the bisimulation problem with an acyclic graph (well-founded set).
This observation makes the result \citep{Dovier2004aeaf} even more
important as it is asymptotically optimal for acyclic problems.  

Although it might seem that the further possibilities of improvement
in solving \emph{RCPP} are very limited, in fact, there is considerable
room for improvement. First of all, there is no answer whether \emph{RCPP}
could be solved sequentially faster than $O(|\mathsf{S}|\log|\mathsf{S}|)$
for any input data. In particular, it is an unknown algorithm running
in the linear time $O(|\mathsf{S}|)$. The parallel algorithms usually
suffer from the high demand on system resources, such as \emph{RAM}
and processors, or are too complex to be efficiently implemented in
practice. Thus, any attempt to reduce the demand on the resources
of existing algorithms, or to simplify the implementation, is valuable.

\section{Concurrent implementation\label{sub:Concurrent-implementation}}

The presented solution (Sec. \ref{sec:Bisimulation-Algorithm}) tries
to meet the demand for an efficient, concurrent and easy-to-implement
algorithm solving the \emph{RCPP} problem. It was initially developed
for use within the \emph{CCL} library - a formal notation library
designed mainly for modeling and executing behavior of concurrent
systems \citep{Kulakowski2014mili}. This prompted the author to look
for a simple to understand and easy to implement algorithm solving
the bisimulation problem. Therefore, the algorithm uses the basic
data structures, such as sets, queues or hash tables. The sole exception
is when the keys in a map are data collection itself. In such a case,
the hash code of such a collection must depend on the hash codes of
the elements. Sometimes it is the default behavior of the programming
language%
\footnote{see \emph{Java} platform, class \emph{java.util.AbstractSet}%
}. Often, however, it is reasonable to implement the hash code procedure
yourself. 

The range of keys of objects stored in the different structures varies
widely. Structures like the maps \emph{stateToBlockId, }or\emph{ nextStateToBlockId
}or the queues $\mathcal{S}$ and $\mathcal{M}$ hold at most $|\mathsf{S}|$
elements at the same time indexed from $1$ to $|\mathsf{S}|$ or
$|\mathsf{S}|\log|\mathsf{S}|$ respectively. Therefore, in order
to facilitate parallel processing, these structures could be implemented
in the form of arrays. On the other hand, there are such structures
as \emph{initPartition} or \emph{subBlocksMap.} The keys of objects
stored in them have the form of sorted sets. Hence, although it is
possible to determine the range of these keys, in practice it is better
to use concurrent objects \citep{Moir2005hods,Herlihy2008taom} to
implement these data structures. Of course, the degree of parallelism
is somewhat limited. Since, very often, the overall cost of method
call is high compared to the synchronization time, the expected slowdown
does not have to be significant %
\footnote{That is because many concurrent objects intensively use CAS (Compare
And Swap) \citep{Fraser2007dc} based synchronization. In such a case,
synchronization overhead is reduced to single processor instruction
calls. Moreover, an important threat to the operations' performance
can be the number of concurrently running threads see. e.g. \citep{Kirousis1992rmvi}.%
}. 

The presented algorithm is designed so that all the read and write
operations involving concurrent structures are grouped together. For
example, \emph{Marking()} gets the blocks from $\mathcal{S}$, processes
them, but modifies only the set of the marked states within the selected
block. Since \emph{Marking()} does not read the information about
the marked states, in fact marking operations (Listing: \ref{listing:bisimalg:marksplitcopyphase:marking},
Line: \ref{bisimalg:marksplitcopyphase:markerWorker:states-mark})
do not need to be synchronized with each other. Hence, the only synchronization
points between different \emph{Marking() }calls are limited to taking
blocks from $\mathcal{S}$ and adding blocks to $\mathcal{M}$. The
\emph{Splitting()} procedure also tries to follow the same design
scheme. It gets elements from $\mathcal{M}$ and adds the new elements
to $\mathcal{S}$, so that the interference between different \emph{Splitting()}
calls is minimal. Inside the \emph{Marking()} and \emph{Splitting(),}
as well as inside the \emph{InitializationPhase(),} procedures, there
are many places which can be processed concurrently. In practice,
there is no sense processing them all in separate threads. Some collections
have so few elements that the gain from parallel processing does not
compensate the time spent on launching of the new subtasks. Moreover,
often the degree of parallelism provided by the hardware platforms
is far from ideal. Thus, the possibly high granularity of the computing
tasks, does not always translate into the increase in the number of
threads actually executed in parallel. Therefore, when implementing
the algorithm, it is wise to limit the degree of parallelization of
the code.

The created experimental implementation%
\footnote{\href{http://www.kulakowski.org/concurrent-bisimulation}{www.kulakowski.org/concurrent-bisimulation}%
}  tries to find a trade-off between parallelization and effectiveness.
In particular, in the course of the experiments, it turned out that
the splitting according to the \emph{``process all but the largest
one''} strategy is so effective that the resulting blocks are usually
small. Hence, it turned out that in most cases the parallel iteration
over the elements of the newly separated block does not make sense. 

The test application was written in \emph{Java} 7 and has been tested
on an isolated test station Intel{\footnotesize\raisebox{1pt}{\textregistered}}~Core\texttrademark~i7-930
(4 cores, 8 threads, 2.8 GHz) processor with 16 GB of operating memory.
As input data for the tests, labelled transition systems from the
\emph{VLTS Benchmark Suite} (provided with the \emph{CADP} tool \citep{Garavel2011cadp})
were used, the largest of which had more than $10^{6}$ states and
more than $5\cdot10^{6}$ transitions. Every considered test case
has been computed five times. The first two runs were treated as a
warm-up, whilst the last three results have been averaged and taken
into account in the final result calculation. The algorithm speedup
$\nicefrac{T_{1}}{T_{t}}$, where $T_{1}$ means execution time of
a single-threaded application, and $T_{t}$ means $t$-thread application
is shown in (Fig. \ref{fig:speedup}).

\begin{figure}
\begin{centering}
\includegraphics[scale=0.5]{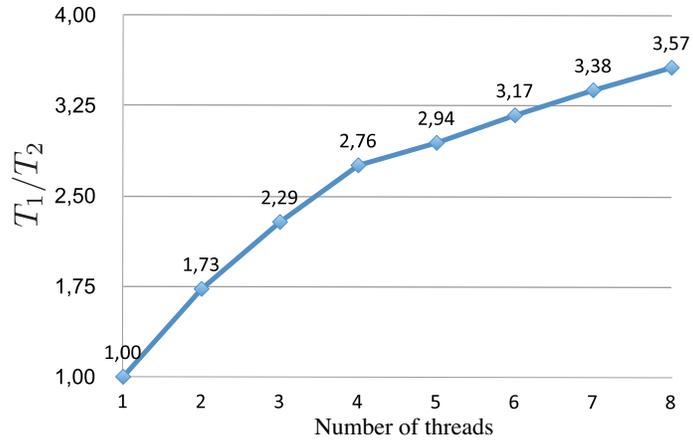}
\par\end{centering}

\caption{The algorithm speedup measured on the quad core Intel{\scriptsize \textregistered}~Core\texttrademark~i7-930
based machine}
\label{fig:speedup}
\end{figure}

The obtained results (maximal speedup $3,57$) seem satisfactory,
considering that the tests were run on a machine equipped with a four-core
processor. Theoretically, because of the eight computing threads,
the speedup could be higher. However, it should be noted that all
the tasks assigned to the working threads are memory intensive. Thus,
the actual limit for the speedup increase seems to be the number of
shared cache spaces of the hyper-threading architecture \citep{Intel2003ihtt,Athanasaki2007etpl}
rather than the availability of the \emph{CPU's} computing cores.

\section{Summary \label{sec:Summary}}

The article presents a new efficient concurrent algorithm for solving
the bisimulation problem. The main inspiration for the design of this
solution was the concept of the state signatures \citep{Blom2002adaf}
and the work of \emph{Paige} and \emph{Tarjan} \citep{Paige1984alta}.
The algorithm follows the principle - \emph{``process all but the
largest one'', }which is a modified version of the strategy introduced
by \emph{Hopcroft} \citep{Hopcroft1971anln}. The achieved expected
running time in the sequential and concurrent case is close to the
best alternative bisimulation algorithms \citep{Paige1984alta,Rajasekaran1998pafr}.
 The algorithm tries to be easy to implement. It uses hash maps,
sets and queues (and their concurrent counterparts) available in most
programming languages. Hence, in the opinion of the author, the algorithm
is likely to be useful for many professionals who want to improve
the performance of the existing solutions, or to implement new ones
from scratch.

\section*{Acknowledgement }

The author would like to thank Prof. Marcin Szpyrka, for taking the
time to read the manuscript. His remarks undoubtedly helped to improve
the final version of the article.  Special thanks are due to Ian
Corkill for his editorial help.

\bibliographystyle{plain}
\bibliography{papers_biblio_reviewed}

\begin{thebibliography}{10}

\bibitem{Aho1974tdaa}
A.~V. Aho, J.~E. Hopcroft, and J.~D. Ullman.
\newblock {\em The Design and Analysis of Computer Algorithms}.
\newblock Addison-Wesley Publishing Company, 1974.

\bibitem{Armstrong2012}
P.~J. Armstrong, M.~Goldsmith, G.~Lowe, J.~Ouaknine, H.~Palikareva, A.~W.
  Roscoe, and J.~Worrell.
\newblock {Recent Developments in FDR}.
\newblock In {\em Computer Aided Verification}, pages 699--704, 2012.

\bibitem{Athanasaki2007etpl}
E.~Athanasaki, N.~Anastopoulos, K.~Kourtis, and N.~Koziris.
\newblock {Exploring the performance limits of simultaneous multithreading for
  memory intensive applications}.
\newblock {\em The Journal of Supercomputing}, 44(1):64--97, October 2007.

\bibitem{Baier1997wbff}
C.~Baier and H.~Hermanns.
\newblock Weak bisimulation for fully probabilistic processes.
\newblock In Orna Grumberg, editor, {\em Computer Aided Verification}, volume
  1254 of {\em Lecture Notes in Computer Science}, pages 119--130. Springer
  Berlin Heidelberg, 1997.

\bibitem{Baier2008pomc}
C.~Baier and J.~Katoen.
\newblock {\em Principles of model checking}.
\newblock The MIT Press, Cambridge, Mass., 2008.

\bibitem{Balcazar1992dbip}
J.~Balc{\'a}zar, J.~Gabarr{\'o}, and M.~S{\'a}ntha.
\newblock {Deciding bisimilarity is P-complete}.
\newblock {\em Formal Aspects of Computing}, 4(S1):638--648, November 1992.

\bibitem{Balicki2009fdox}
K.~Balicki and M.~Szpyrka.
\newblock Formal definition of {XCCS} modelling language.
\newblock {\em Fundamenta Informaticae}, 93(1-3):1--15, 2009.

\bibitem{Bergstra2006otdr}
J.~A. Bergstra.
\newblock {On the Design Rationale of ACP Style Process Algebras}.
\newblock {\em Electr. Notes Theor. Comput. Sci.}, 162:79--85, 2006.

\bibitem{Bezemer2011lhrt}
M.~M. Bezemer, R.~J.~W. Wilterdink, and J.~F. Broenink.
\newblock {LUNA: Hard Real-Time, Multi-Threaded, CSP-Capable Execution
  Framework}.
\newblock In {\em CPA}, pages 157--175, 2011.

\bibitem{Bhatt1991idpi}
P.C.P. Bhatt, K.~Diks, T.~Hagerup, V.C. Prasad, T.~Radzik, and S.~Saxena.
\newblock Improved deterministic parallel integer sorting.
\newblock {\em Information and Computation}, 94(1):29 -- 47, 1991.

\bibitem{Blom2002adaf}
S.~Blom and S.~Orzan.
\newblock {A distributed algorithm for strong bisimulation reduction of state
  spaces}.
\newblock {\em Electronic Notes in Theoretical Computer Science},
  68(4):523--538, 2002.

\bibitem{Blom2009dbbm}
S.~Blom and J.~van~de Pol.
\newblock {Distributed Branching Bisimulation Minimization by Inductive
  Signatures}.
\newblock In Lubos Brim and Jaco van~de Pol, editors, {\em Proceedings 8th
  International Workshop on Parallel and Distributed Methods in Verification},
  pages 32--46, 2009.

\bibitem{Bronstein2005tdm}
I.~N. Bronstein, K.~A. Semendjajew, G.~Musiol, and H.~M{\"u}hlig.
\newblock {\em Handbook of Mathematics}.
\newblock Springer Verlag, Frankfurt am Main, 5 edition, 2005.

\bibitem{Calzolai2008tatf}
F.~Calzolai, R.~{De Nicola}, M.~Loreti, and F.~Tiezzi.
\newblock {TAPAs: A Tool for the Analysis of Process Algebras}.
\newblock {\em T. Petri Nets and Other Models of Concurrency}, 1:54--70, 2008.

\bibitem{Jeong98afpi}
J.~Cheoljoo, K.~Youngchan, O.~Youngbae, and K.~Heungnam.
\newblock A faster parallel implementation of the kanellakis-smolka algorithm
  for bisimilarity checking.
\newblock In {\em In Proceedings of the International Computer Symposium},
  1998.

\bibitem{Cormen2009ita}
T.~H. Cormen, C.~E. Leiserson, R.~L. Rivest, and C.~Stein.
\newblock {\em Introduction to Algorithms}.
\newblock MIT Press, 3rd edition, 2009.

\bibitem{Diaz2013pnfm}
M.~Diaz.
\newblock {\em Petri Nets: Fundamental Models, Verification and Applications}.
\newblock Wiley, 2013.

\bibitem{Dietzfelbinger1994dphu}
M.~Dietzfelbinger, A.~Karlin, K.~Mehlhorn, F.~Meyer auf~der Heide, H.~Rohnert,
  and R.~E. Tarjan.
\newblock Dynamic perfect hashing: Upper and lower bounds.
\newblock {\em SIAM J. Comput.}, 23(4):738--761, August 1994.

\bibitem{Dovier2004aeaf}
A.~Dovier, C.~Piazza, and A.~Policriti.
\newblock {An efficient algorithm for computing bisimulation equivalence}.
\newblock {\em Theoretical Computer Science}, 311(1-3):221--256, 2004.

\bibitem{Fernandez1989}
J.~C. Fernandez.
\newblock {An Implementation of an Efficient Algorithm for Bisimulation
  Equivalence}.
\newblock {\em Scientific Computer Programming}, 13(1):219--236, 1989.

\bibitem{Fraser2007dc}
K.~Fraser and T.~Harris.
\newblock {Concurrent programming without locks}.
\newblock {\em ACM Transactions on Computer Systems}, 25(2):5--es, May 2007.

\bibitem{Garavel2011cadp}
H.~Garavel, F.~Lang, and R.~Mateescu.
\newblock {CADP 2010: A toolbox for the construction and analysis of
  distributed processes}.
\newblock In {\em Tools and Algorithms for the Construction and Analysis of
  Systems - TACAS 2011}, 2011.

\bibitem{vanGlabbeek20013}
R.J.~van Glabbeek.
\newblock The linear time -- branching time spectrum i. the semantics of
  concrete, sequential processes.
\newblock In J.A. Bergstra, A.~Ponse, and S.A. Smolka, editors, {\em Handbook
  of Process Algebra}, chapter~1, pages 3 -- 99. Elsevier Science, Amsterdam,
  2001.

\bibitem{Greenlaw1995ltpc}
R.~Greenlaw, H.~J. Hoover, and W.~L. Ruzzo.
\newblock {\em {Limits to parallel computation: P-completeness theory}},
  volume~42.
\newblock Oxford university press Oxford, 1995.

\bibitem{Groote1990aeaf}
J.~F. Groote and F.~Vaandrager.
\newblock {An efficient algorithm for branching bisimulation and stuttering
  equivalence}.
\newblock In {\em link.springer.com}, pages 626--638. Springer-Verlag,
  Berlin/Heidelberg, 1990.

\bibitem{Ha1999aepa}
Kyeoung-Ju Ha, Kyo-Min Ku, Hae-Kyeong Park, Young-Kook Kim, and Kwan-Woo Ryu.
\newblock {An efficient parallel algorithm for the single function coarsest
  partition problem on the EREW PRAM}.
\newblock {\em ETRI journal}, 21(2):22--30, 1999.

\bibitem{Herlihy2008taom}
M.~Herlihy and N.~Shavit.
\newblock {\em "The Art of Multiprocessor Programming"}.
\newblock Elsevier, 2008.

\bibitem{Hopcroft1971anln}
J.~E. Hopcroft.
\newblock An n log n algorithm for minimizing states in a finite automaton.
\newblock Technical report, Stanford University, Stanford, CA, USA, 1971.

\bibitem{Intel2003ihtt}
Intel.
\newblock {Intel Hyper-Threading Technology - Technical User's Guide}.
\newblock on-line, January 2003.

\bibitem{Kanellakis1983cefs}
P.~Kanellakis and S.~Smolka.
\newblock {CCS expressions, finite state processes, and three problems of
  equivalence}.
\newblock {\em Information and Computation}, 86(1):43--68, 1983.

\bibitem{Kirousis1992rmvi}
L.~M. Kirousis, P.~Spirakis, and P.~Tsigas.
\newblock {Reading many variables in one atomic operation solutions with linear
  or sublinear complexity}.
\newblock In {\em Distributed Algorithms}, volume 579 of {\em LNCS}, pages
  229--241. Springer-Verlag, Berlin/Heidelberg, 1992.

\bibitem{Kulakowski2014mili}
Konrad Ku{\l}akowski, Piotr Matyasik, and Sebastian Ernst.
\newblock Modeling indoor lighting inspection robot behavior using concurrent
  communicating lists.
\newblock {\em Expert Systems with Applications}, 41(4, Part 1):984 -- 989,
  2014.

\bibitem{Kwiatkowska2009ppmc}
M.~Z. Kwiatkowska, G~Norman, and D.~Parker.
\newblock {PRISM: probabilistic model checking for performance and reliability
  analysis}.
\newblock {\em SIGMETRICS Performance Evaluation Review}, 36(4):40--45, 2009.

\bibitem{Ladner1980ppc}
R.~E. Ladner and M.~J. Fischer.
\newblock Parallel prefix computation.
\newblock {\em Journal of the AssoclaUon for Computing Machinery},
  24(4):831--838, 1980.

\bibitem{Milner1980acoc}
R.~Milner.
\newblock {\em {A Calculus of Communicating Systems}}, volume~92 of {\em
  Lecture Notes in Computer Science}.
\newblock Springer-Verlag, Berlin, 1980.

\bibitem{Moir2005hods}
M.~Moir and N.~Shavit.
\newblock {Concurrent Data Structures}.
\newblock In {\em {Handbook of Data Structures and Applications}}. Chapman and
  Hall/CRC, February 2005.

\bibitem{Moir2007cds}
M.~Moir and N.~Shavit.
\newblock Concurrent data structures.
\newblock In {\em Handbook of Data Structures and Applications, D. Metha and S.
  Sahni Editors}, pages 47--14 --- 47--30, 2007.
\newblock Chapman and Hall/CRC Press.

\bibitem{Paige1984alta}
R.~Paige and R.~E. Tarjan.
\newblock {A linear time algorithm to solve the single function coarsest
  partition problem}.
\newblock In {\em link.springer.com}, pages 371--379. Springer Berlin
  Heidelberg, Berlin, Heidelberg, May 1984.

\bibitem{PaigeTarjan1987tpra}
R.~Paige and R.~E. Tarjan.
\newblock Three partition refinement algorithms.
\newblock {\em SIAM J. Comput.}, 16:973--989, December 1987.

\bibitem{Park1981caao}
D.~Park.
\newblock {Concurrency and automata on infinite sequences}.
\newblock In {\em link.springer.com}, pages 167--183. Springer-Verlag,
  Berlin/Heidelberg, 1981.

\bibitem{Rajasekaran1998pafr}
S.~Rajasekaran and Insup Lee.
\newblock Parallel algorithms for relational coarsest partition problems.
\newblock {\em Parallel and Distributed Systems, IEEE Transactions on},
  9(7):687--699, 1998.

\bibitem{Roscoe2005ttap}
A.~W. Roscoe.
\newblock {\em {The Theory and Practice of Concurrency}}.
\newblock Prentice-Hall (Pearson), 2005.

\bibitem{Saha2007aiba}
D.~Saha.
\newblock {An Incremental Bisimulation Algorithm}.
\newblock In {\em Foundations of Software Technology and Theoretical Computer
  Science, 27th International Conference, New Delhi, India}, pages 204--215,
  2007.

\bibitem{vanGlabbeek1996btab}
R.~J. van Glabbeek and W.~P. Weijland.
\newblock Branching time and abstraction in bisimulation semantics.
\newblock {\em Journal of ACM}, 43(3):555--600, May 1996.

\end{thebibliography}

\appendix

\section{Creating an index of the k-element unordered set of numbers\label{sec:parallel-index-creation-lema}}

The algorithm shown below allows users to compute the index $i$ of
the $k$-element unordered set of numbers $\{i_{1},\ldots,i_{k}\}$,
where $i_{i}\in D_{1},\ldots,i_{k}\in D_{k}$ are finite intervals
in $\mathbb{N}$, in $O(\log k)$ time, using $k$ parallel processors.
The resulting index fits in the interval $[0,|D_{1}|\cdot\ldots\cdot|D_{k}|]$.
The input to the algorithm is $tbl$ - $k$-element array of integers.
It is assumed that the table may contain duplicates. The algorithm
consists of the following steps: 
\begin{enumerate}
\item parallel sort of $tbl$,
\item every processor with the number $i>0$ assigned to the $i$'th cell
of $tbl$ checks whether $tbl[i-1]=tbl[i]$, and if so puts the marker
$\infty\notin\mathbb{N}$ greater than any number in $I_{1},\ldots,I_{k}$
at the $i$'th position into the $tbl$ array,
\item parallel sort of $tbl$, 
\item every processor with the number $i\geq0$ checks whether $tbl[i]\neq\infty$
and $v.out[i+1]=\infty$. If so, it stores the index $i$ in the auxiliary
variable $ts$. 
\item every processor $i=0,\ldots,ts$ computes the value $tbl[i]\cdot|D_{i}|^{i}$
and stores it at the $i$'th position in the auxiliary table $tbl2$. 
\item the values stored in the $tbl2$ table are summed up into a single
integer.
\end{enumerate}
The first and the third step of the algorithm can be performed in
$O(\log k)$ with the help of $\frac{k}{\log k}\log\log k$ CRCW PRAM
processors \citep{Bhatt1991idpi}. Steps 2, 4 and 5 explicitly involve
at most $k$ processors, where every processor performs a simple action,
such as the comparison, multiplication or assignment of numbers. Thus
the parallel execution time of these steps is $O(1)$. The number
of summations in Step six could be performed in $O(\log k)$ using
$\frac{k}{\log k}$ EREW PRAM processors \citep{Ladner1980ppc}. 
\end{document}